\newtheorem{theorem}{Theorem}
\newtheorem{lemma}{Lemma}
\definecolor{camblue}{cmyk}{0.2443, 0.0000, 0.1250, 0.3098}
\newcommand\niton{\mathrel{\m@th\mathpalette\canc@l\owns}}
\newcommand\canc@l[2]{{\ooalign{$\hfil#1/\mkern1mu\hfil$\crcr$#1#2$}}}
\begin{document}

\title{A Lightweight Protocol for Matchgate Fidelity Estimation}
\author{Jędrzej Burkat}
\email{jbb55@cam.ac.uk}
\affiliation{Cavendish Laboratory, Department of Physics, University of Cambridge, CB3 0HE, UK}
\author{Sergii Strelchuk}
\affiliation{Department of Computer Science, University of Oxford, OX1 3QD, UK}

\begin{abstract}
   
   
   

We present a low-depth randomised protocol for estimating the entanglement fidelity between an $n$-qubit matchgate circuit $\mathcal{U}$ and its noisy implementation $\mathcal{E}$. Our method uses a Pauli--Liouville representation adapted to Clifford algebra elements, in which matchgate superoperators acquire a block-diagonal form. This structure enables efficient direct fidelity estimation using only Pauli state preparation and measurement, while avoiding the exponentially costly sampling step required for generic unitary channels. Compared with the protocol of Flammia and Liu [PRL 106, 230501], our algorithm gives an exponential improvement in classical sampling complexity and a multiplicative $1/\sqrt{n}$ reduction in expected shot count for matchgate circuits. The protocol also extends, without asymptotic overhead, to matchgate circuits sandwiched between Clifford circuits. For nearest-neighbour $XY(\theta)$ gates and Givens rotations we demonstrate an increase in superoperator sparsity, giving an additional $1/\sqrt{n}$ reduction in expected shot count and, to our knowledge, the first scalable fidelity estimation protocol for these important matchgate subgroups.

\end{abstract}

\maketitle
\section{Introduction}
Implementing quantum computation is a complex and challenging process due to the ubiquitous presence of noise and decoherence. Accurately identifying the magnitude of these effects leads to the development of more efficient error mitigation strategies thus leading to better utilisation of existing quantum hardware.

Matchgates~\citep{Jozsa_2008, doi:10.1137/S0097539700377025, knill2001fermionic, Terhal_2002, DiVincenzo_2005, bravyi2008contraction} (MGs) represent an important family of two-qubit gates for quantum computation, forming an intriguing class of computation analogous to the dynamics of (non-interacting) free fermions, which are classically efficiently simulable. Similarly to the well-studied case of Clifford computation, matchgate circuits are classically simulable in polynomial time; however, with the addition of a $\mathrm{SWAP}$ gate their computational power is lifted to universal quantum computation. Unlike Clifford gates, they form an infinite, continuous (Lie) group of operators that can be implemented natively on a range of architectures via the M\o lmer-S\o rensen procedure~\citep{S_rensen_2000} and form key primitives in superconducting quantum computing \cite{Foxen_2020, sung2021realization, abrams2020implementation, Kjaergaard_2020}. They have also appeared in other architectures~\cite{PhysRevApplied.18.064082, grzesiak2020efficient}.

The continuous nature of matchgate circuits makes the study of their behaviour under realistic noise and decoherence conditions extraordinarily challenging -- and in many cases outright impossible. So far, two methods~\citep{Helsen_2022, PRXQuantum.2.010351} have been devised for experimentally estimating their fidelities, both relying on the exponential fitting of data from large-depth circuits to characterise their average noise behaviour. Notably, these approaches struggle with circuits composed of exclusively $XY(\theta)$ gates -- also known as anisotropic Heisenberg spin coupling \cite{kempe2001} or the $XY$ interaction. Such gates fall into the category of matchgates, forming a subgroup of the matchgate group.

We present an efficient protocol for fidelity estimation, similar in spirit to a widely used direct fidelity estimation (DFE) algorithm by S. Flammia and Y. Liu~\citep{Flammia_2011}. Compared to DFE on arbitrary unitary operations, our algorithm offers an exponential speedup in the classical sampling complexity (enabled by utilising an efficient determinantal point process sampling algorithm) and a multiplicative $1 / \sqrt{n}$ speedup in expected shot count for matchgate circuits. We first establish key mathematical results, working in a Pauli--Liouville (PL) representation of quantum channel adapted to matchgates. In this picture, it is experimentally straightforward to estimate the superoperator matrix overlap of two channels using only Pauli state preparation and measurement. Our procedure exhibits an additional $1/\sqrt{n}$ speedup for circuits composed of $XY(\theta)$ gates or Givens rotations, and may be extended to allow the benchmarking of matchgate circuits sandwiched by arbitrary, distinct Clifford unitaries. The algorithm is scalable, easy to implement in practice and the work therein has the potential to greatly simplify the process of calibrating matchgate circuits for use in quantum computation.

\section{Preliminaries}
\textit{Channel Fidelity.} Algorithms for estimating channel fidelities represent a key component of a variety of quantum information processing protocols. The channel fidelity describes the average overlap between pure states $| \psi \rangle \langle \psi |$ acted on by an idealised unitary channel $\mathcal{U}$, and the same state acted on by a (possibly noisy) quantum channel $\mathcal{E}$. Mathematically, it is defined as~\citep{Nielsen_2002}:

\begin{align}
    F(\mathcal{E}, \mathcal{U}) &= \int d \psi \langle \psi | U^{\dagger} \mathcal{E}(|\psi \rangle \langle \psi |) U | \psi \rangle \label{eq:fidelity_def}  \\
    & = \frac{d F_e(\mathcal{E}, \mathcal{U}) + 1}{d+1}. \nonumber
\end{align}
\noindent Where $d \psi$ is the uniform Haar measure over $\mathcal{H}$, normalised so that $\int d \psi = 1$, and $d$ is the Hilbert space dimension. $F_e(\mathcal{E}, \mathcal{U})$ is the entanglement fidelity -- a closely related and experimentally accessible quantity. Estimating it is the modus operandi of many randomised benchmarking protocols, as it allows for the direct calculation of $F(\mathcal{E}, \mathcal{U})$ by Equation \eqref{eq:fidelity_def}.

\textit{Pauli--Liouville (PL) Representation.} In the PL picture of quantum channels, we consider the vector space $\mathbb{C}^{2^{n} \times 2^n}$ spanned by the complex matrices $W_i \cong |i \rangle \rangle $, with the Hilbert-Schmidt inner product $\langle \langle i | j \rangle \rangle = 2^{-n} \text{Tr}(W_i^\dagger W_j)$. Under this representation, both pure and mixed quantum states may be written as $\rho = \sum_i \rho_i W_i$, or (using double ket notation) as $| \rho \rangle \rangle = \sum_i \rho_i | i \rangle \rangle$. The action of unitary operators, as well as CPTP maps, can then be represented through multiplication by a $2^{2n} \times 2^{2n}$ superoperator. Multiplication of vectors by superoperators $\hat{\Lambda}$ corresponds to the application of a quantum channel, whereas multiplying superoperator matrices gives a composition of quantum channels:
\begin{equation}
     \widehat{\Lambda \circ \Omega} | \rho \rangle \rangle = | \Lambda (\Omega (\rho)) \ \rangle \rangle = \hat{\Lambda} \hat{\Omega  } | \rho \rangle \rangle.
\end{equation}
In a given basis, the superoperators can be written as:
\begin{equation}
    \hat{\Lambda} = \sum_{i,j=1}^{2^{2n}} \chi_\Lambda(i,j) | i \rangle \rangle \langle \langle j |,
\end{equation}
with the matrix elements given by:
\begin{equation}
    \chi_\Lambda(i,j) = \frac{1}{2^n}\text{Tr}(W_i^\dagger \Lambda(W_j)).
\end{equation}

Our protocol will make use of the Clifford basis, a set of $2^{2n}$ orthonormal monomials of Clifford algebra generators $c_i$, the latter forming a set of $2n$ Hermitian matrices satisfying the anti-commutation relations:
\begin{equation}
    \{c_i, c_j \} = 2 \delta_{ij} \mathbb{1}, \ \ \ \ \ i, j \in [2n].
\end{equation} Up to unitary equivalence, the generators can be chosen to be the Jordan--Wigner operators:

\begin{align}
    c_{2k-1} &= Z \otimes \cdots \otimes Z \otimes X \otimes \mathbb{1} \otimes \cdots \otimes \mathbb{1}  \\
    c_{2k} &= \underbrace{Z \otimes \cdots \otimes Z}_{k-1} \otimes \ Y \otimes \underbrace{\mathbb{1} \otimes \cdots \otimes \mathbb{1}}_{n-k}. \nonumber
\end{align}

\noindent We will denote elements of this basis as $c_I = c_{i_1} \hdots c_{i_{|I|}}$, where $I \subseteq \{ 1, \hdots, 2n \}$ is a set of lexicographically ordered indices taken from $[2n]$, with $c_{\emptyset} = \mathbb{1}$ (an example of our ordering can be seen in Figure \ref{fig:fsim_diagram}). In the Jordan--Wigner representation every basis element is proportional to a Pauli string:
\begin{equation}
    c_I = \phi_I \mathbb{P}_I, \label{clifford_and_pauli}
\end{equation}
and for any subset $I \subseteq [2n]$ with cardinality $k$ the Hermitian conjugate is given by $c_I^\dagger = (-1)^{k(k-1)/2} c_I$. As a result, the phase $\phi_I$ is either $\pm 1$ or $\pm i$, and remains real or pure imaginary across all monomials indexed by subsets with constant degree $k$ (see Appendix \ref{appendixa}). The difference between Clifford and Pauli bases then amounts to a re-ordering of the basis elements and change of phase -- a subtle alteration which considerably simplifies the protocol.

\textit{Matchgate PL Superoperators.} Matchgates (also known as fermionic linear optics gates) are two-qubit operators -- denoted $G(A, B)$ -- acting as $A \in U(2)$ on the even-parity $\{ |00 \rangle, |11 \rangle \}$, and $B \in U(2)$ on the odd-parity $\{ |01 \rangle, |10 \rangle \}$ subspaces of $\mathbb{C}^2 \otimes \mathbb{C}^2$: 

\begin{equation}
    G(A,B) = \begin{pmatrix}
        a_{11} & 0 & 0 & a_{12} \\
        0 & b_{11} & b_{12} & 0 \\
        0 & b_{21} & b_{22} & 0 \\
        a_{21} & 0 & 0 & a_{22}
      \end{pmatrix}, \label{eq:matchgate}
\end{equation}

\noindent such that $\det A = \det B$. Operators of this form for which $\det A \neq \det B$ (for example, a $\mathrm{SWAP}$ or $\mathrm{CZ}$ gate) are denoted $\tilde{G}(A, B)$. 

Any matchgate circuit (MGC) may be written as a Gaussian operation $U = e^{-iH}$, where:
\begin{equation}
    H = i \sum_{i \neq j} h_{i j} c_i c_j \label{eq:hamiltonian}
\end{equation}
is a quadratic Hamiltonian, with the coefficients $h_{i j}$ forming a real, antisymmetric $2n \times 2n$ matrix. Previous work~\citep{knill2001fermionic,Terhal_2002,DiVincenzo_2005,Jozsa_2008} has shown that the orthogonal $SO(2n)$ matrix $R = e^{4h}$ generated from $h$ provides full information about $U$. Furthermore, conjugating a generator $c_i$ by a matchgate unitary $U$ is a rotation by $R$ in the space of Clifford algebra generators:
\begin{equation}
    U c_i U^\dagger = \sum_{j = 1}^{2n} R_{j i} c_j.
\end{equation}    
This relation suggests that the Clifford basis is a natural choice for the Pauli--Liouville representation of matchgate unitaries. It turns out that elements of the associated superoperator matrix $\hat{\mathcal{U}}$ in this basis can be efficiently calculated, only requiring access to $R$.

\begin{theorem} \label{thm:matchgatesupop}
    For any matchgate circuit $U = e^{-iH}$ with an associated $SO(2n)$ matrix $R$, in the Clifford algebra generator basis the superoperator matrix elements $\chi_\mathcal{U}(I,J)$ are given by:
    \begin{equation}
        \chi_\mathcal{U}(I, J) = \delta_{|I|, |J|} \det| R_{I,J} |,
    \end{equation}
    where $R_{I, J}$ is a submatrix of $R$ for which each $i$\textsuperscript{th} row $i \in I$ is preserved, and each $j$\textsuperscript{th} column $j \in J$ is preserved. Consequently, $\hat{\mathcal{U}}$ is a block diagonal matrix with $2n+1$ blocks, each of which is a ${2n \choose k} \times {2n \choose k}$ compound matrix $C_k(R)$, where $k = |I|$ is the degree of the monomial $c_I = c_{i_1} \hdots c_{i_k}$.
\end{theorem}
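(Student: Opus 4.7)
The plan is to evaluate $\chi_\mathcal{U}(I,J) = \frac{1}{2^n}\text{Tr}(c_I^\dagger U c_J U^\dagger)$ directly from its definition. I would first apply the rotation law $U c_i U^\dagger = \sum_j R_{ji} c_j$ to each factor of $c_J = c_{j_1}\cdots c_{j_k}$, producing a $k$-fold sum
$U c_J U^\dagger = \sum_{\ell_1,\ldots,\ell_k} R_{\ell_1 j_1}\cdots R_{\ell_k j_k}\, c_{\ell_1}\cdots c_{\ell_k}$. The remaining task is to collect this sum back into the Clifford basis and extract the coefficient of $c_I$.

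I would then split the summation into tuples $(\ell_1,\ldots,\ell_k)$ with all distinct entries versus those with repetitions. For distinct tuples, each is a permutation $\sigma \in S_k$ of some lexicographically ordered subset $L \subseteq [2n]$ of size $k$; anticommutation of the $c_{\ell_a}$ gives $c_{\ell_1}\cdots c_{\ell_k}=\mathrm{sgn}(\sigma)\,c_L$, and the sum over $\sigma$ for fixed $L$ is exactly the Leibniz formula for the minor $\det(R_{L,J})$. Combined with orthonormality of the Clifford basis $\frac{1}{2^n}\text{Tr}(c_I^\dagger c_L)=\delta_{IL}$, this yields the claimed formula $\det(R_{I,J})$ whenever $|I|=|J|=k$.

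The main obstacle is ruling out the repeated-index contributions, which I expect to vanish by virtue of the orthogonality $R^T R = \mathbb{1}$. Concretely, if $\ell_a=\ell_b$ for some $a\neq b$, anticommuting $c_{\ell_b}$ next to $c_{\ell_a}$ and using $c_{\ell_a}^2=\mathbb{1}$ collapses the product to a monomial of degree $k-2$; meanwhile the marginal sum over the common index yields $\sum_{\ell}R_{\ell j_a}R_{\ell j_b}=\delta_{j_a j_b}=0$, since the labels in $J$ are distinct. A short induction on the number of coincidences shows that every tuple with at least one repetition contributes zero, so $U c_J U^\dagger$ lies entirely in the degree-$k$ subspace, which simultaneously forces the Kronecker factor $\delta_{|I|,|J|}$. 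With the matrix-element formula in hand, the block-diagonal structure is immediate: the superoperator decouples into blocks indexed by $k=|I|=|J|$ of size $\binom{2n}{k}\times\binom{2n}{k}$, and by definition an array whose entries are the $k\times k$ minors of $R$ is precisely the $k$-th compound matrix $C_k(R)$.
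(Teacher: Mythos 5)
Your proposal is correct and follows essentially the same route as the paper: expand $U c_J U^\dagger$ factor-by-factor via the rotation law, reorder with anticommutation to produce the signed-permutation (Leibniz) form of the minor $\det R_{I,J}$, and extract the coefficient using orthonormality of the Clifford basis. The only difference is that you explicitly justify discarding the repeated-index tuples via $\sum_{\ell} R_{\ell j_a} R_{\ell j_b} = \delta_{j_a j_b} = 0$ — a step the paper performs silently when it restricts the sum to $\nu_1 \neq \cdots \neq \nu_k$ — so this is a welcome tightening rather than a divergence.
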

\begin{proof}
    This is a known and straightforward result (see \cite{Chapman_2018}, for example). Starting with the definition of $\chi_\mathcal{U}(I,J)$, we can prove it by inserting $\mathbb{1} = U^\dagger U$ between each $c_j$ operator:
\begin{widetext}
    \begin{align}
        \chi_\mathcal{U} (I, J) &= \frac{1}{2^n} \text{Tr} \left( (c_{i_1} \hdots c_{i_m})^\dagger U (c_{j_1} \hdots c_{j_k}) U^\dagger  \right) \label{eq:matchgate_superop} \\
        &= \frac{1}{2^n} \text{Tr} \left( (c_{i_1} \hdots c_{i_m})^\dagger U c_{j_1} U^\dagger \hdots U c_{j_k} U^\dagger  \right) \nonumber \\
        &= \frac{1}{2^n} \text{Tr} \left( (c_{i_1} \hdots c_{i_m})^\dagger \sum_{\substack{\nu_1, \hdots, \nu_k \\ \nu_1 \neq \hdots \neq \nu_k}} R_{\nu_1 j_1} \hdots R_{\nu_k j_k} c_{\nu_1} \hdots c_{\nu_k}\right) \nonumber \\
        &= \frac{1}{2^n} \text{Tr} \left( (c_{i_1} \hdots c_{i_m})^\dagger \mathop{\sum \ \sum}_{\substack{1 \leq \nu_1 < \hdots < \nu_k \leq 2n, \\ \sigma \in S_k}} \text{sgn}(\sigma) R_{\nu_{\sigma(1)} j_1} \hdots R_{\nu_{\sigma(k)} j_k} c_{\nu_1} \hdots c_{\nu_k} \right) \nonumber \\
        &= \delta_{mk} \sum_{\sigma \in S_k} \text{sgn}(\sigma) R_{i_{\sigma(1)} j_1} R_{ i_{\sigma(2)} j_2} \hdots R_{i_{\sigma(k)} j_k} \nonumber \\
        &= \delta_{mk} \det |R_{(i_1 \hdots i_k, j_1 \hdots j_k)}|. \nonumber \qedhere
    \end{align}
\end{widetext}
\end{proof}

\begin{tcolorbox}[title = {Matchgate Fidelity Estimation: Requirements and Runtime}, colback=white, colframe=camblue, fonttitle=\bfseries, title filled=false]
    \textit{Inputs:} 
    \begin{itemize}
        \item Quantum circuit $\mathcal{E}$ for an idealised matchgate unitary $\mathcal{U}$, with the ability to prepare and measure Pauli eigenstates. 
        \item $SO(2n)$ matrix $R$ associated with $\mathcal{U}$, from which superoperator matrix elements $\chi_\mathcal{U}(I,J)$ can be efficiently computed.
        \item \textit{(Optional)} Well-conditioning parameter $\alpha$, such that for all $(I, J)$, either $|\chi_\mathcal{U}(I,J)| \geq \alpha$ or $|\chi_\mathcal{U}(I,J)| = 0$.
    \end{itemize}
    \textit{Output:} 
    \begin{itemize}
        \item Entanglement fidelity estimator $\tilde{Y}$, for which $F_e(\mathcal{E}, \mathcal{U})$ lies in the range $[\tilde{Y} - 2\epsilon, \tilde{Y} + 2\epsilon]$ with probability at least $1 - 2\delta$. 
    \end{itemize}

    \textit{Runtime Parameters (without \& with well-conditioning):} 
    \begin{itemize}
        \item Sample numbers: 
        
        $l = \left\lceil \frac{1}{\epsilon^2 \delta} \right\rceil, \ \left \lceil \frac{2 \ln(2/\delta)}{\alpha^2 \epsilon^2} \right \rceil $
        \item Iteration number: 
        
        $m_\mu = \left\lceil \frac{2 \ln(2/\delta)}{\chi_\mathcal{U}(I_\mu, J_\mu)^2 l \epsilon^2 } \right\rceil$

        \item Total shot bounds: 
        
        $\mathbb{E}(m) \leq \mathcal{O}(\frac{1}{\epsilon^2 \delta} + \frac{2^{2n} \ln(1/\delta)}{\sqrt{n} \epsilon^2}), \\ m \leq \mathcal{O}(\frac{\ln(1/\delta)}{\alpha^2 \epsilon^2}) $ 

        \item Classical processing time: $\mathcal{O}(n^3)$ 
    \end{itemize}

    \textit{For any $\hat{\mathcal{U}}$ with PL matrix elements $\chi_\mathcal{U}(I,J)$:}
    \begin{center}
    $\mathbb{E}(m) \leq \mathcal{O}(\frac{1}{\epsilon^2 \delta} + \frac{\#[\chi_\mathcal{U}(I,J)  \neq 0]}{2^{2n}}\frac{\ln(1/\delta)}{\epsilon^2})$
    \end{center}
\end{tcolorbox}

Theorem \ref{thm:matchgatesupop} tells us that in the PL representation a matchgate superoperator can have at most $\sum_{k=0}^{2n}{2n \choose k}^2 = {4n \choose 2n}$ elements. Asymptotically, this upper bounds the number of non-zero elements as:
\begin{equation}
    \#[\chi_\mathcal{U}(I,J) \neq 0 ] \leq 2^{4n} / \sqrt{n}. \label{eq:mg_nonzero_bound}
\end{equation}
\noindent In Section \ref{sec:xygroup} we show that for circuits composed of exclusively $XY(\theta)$ gates or Givens rotations, this is further reduced to $\#[\chi_\mathcal{U}(I,J) \neq 0 ] \leq 2^{4n} / n$. For comparison, Clifford gates form monomial matrices with entries equal to $\pm 1$ or $\pm i$. For Clifford-matchgate circuits the superoperator matrix becomes block-diagonal again, with each block forming an orthogonal monomial matrix (i.e. a signed permutation in $B(m) \subset SO(m)$). 

\section{Matchgate Fidelity Estimation}
Let $\mathcal{U}$ be a unitary channel (with an associated superoperator $\hat{\mathcal{U}}$) describing the action of an $n$-qubit matchgate circuit $U$, and $\mathcal{E}$ the mathematical description of its actual, noisy implementation. We assume the ability to either prepare and measure states in the Pauli eigenbasis or apply single-qubit unitaries out of $\{X, H, S \}$ to the $| 0 \rangle$ state with negligible error (which itself may be characterised using other methods first), and wish to quantify the distance between $\mathcal{U}$ and $\mathcal{E}$. A natural measure for this is the entanglement fidelity, which can be calculated from the superoperator  overlap:

\begin{align}
    F_e(\mathcal{E}, \mathcal{U}) &= \frac{1}{2^{2n}} \text{Tr}(\hat{\mathcal{U}}^\dagger \hat{\mathcal{E}}) \label{eq:entanglement_fidelity} \\
   &= \frac{1}{2^{2n}} \sum_{i,j,k,l} \chi^*_\mathcal{U}(i,j) \chi_\mathcal{E}(k,l) \langle \langle i | k \rangle \rangle \langle \langle l| j \rangle \rangle \nonumber \\
   &= \frac{1}{2^{2n}} \sum_{i, j} \chi^*_\mathcal{U}(i, j) \chi_\mathcal{E}(i, j). \nonumber
\end{align}

To estimate the entanglement fidelity for matchgate circuits, we modify the DFE algorithm \citep{Flammia_2011} to use the Clifford basis: first, pick pairs of indices $(I, J)$ at random with probabilities $\text{Pr}(I, J) = 2^{-2n} |\chi_\mathcal{U}(I, J)|^2$ and then determine matrix elements $\chi_\mathcal{E}(I, J)$, thus constructing the random variable: 

\begin{equation}
    X = \frac{\chi_\mathcal{E}(I, J)}{\chi_\mathcal{U}(I, J)}, \label{eq:X_variable}
\end{equation}

\noindent which satisfies $\mathbb{E}[X] = F_e(\mathcal{E}, \mathcal{U})$.  For the case of matchgates, by Equation \eqref{eq:matchgate_superop} all superoperator matrix elements are real, and vanishing whenever $|I| \neq |J|$. Thus, Equation \eqref{eq:entanglement_fidelity} may be rewritten as:

\begin{equation}
    F_e(\mathcal{E}, \mathcal{U}) = \frac{1}{2^{2n}}\sum_{k=0}^{2n} \ \sum_{|I| = |J| = k} \chi_\mathcal{E}(I, J) \chi_\mathcal{U}(I, J). \label{eq:mg_fidelity_sum}
\end{equation}

Before applying our algorithm, one must isolate the matrix $R$ associated with the matchgate circuit (c.f. Equation \eqref{eq:hamiltonian}). In contrast to DFE on general unitaries, our approach does not require pre-computing the full superoperator matrix, and thus all classical steps (including sampling $(I, J)$, see Section \ref{sec:sup_sampling}) can be performed efficiently. Key information concerning our protocol is provided on the previous page and Algorithms \ref{box:algorithm}, \ref{box:algorithm2}. Proofs are relegated to Appendix \ref{appendixb} and \ref{app:dpp}. 

\textit{Expectation Values.} To illustrate why our protocol works, we follow similar reasoning to \cite{Flammia_2011}.  We consider the `perfect' random variable $X$, as in \eqref{eq:X_variable}, which may be averaged to reveal the entanglement fidelity. We define an estimator $Y = \frac{1}{l} \sum_\mu X_\mu$, where $l$ is the `sample number' and $X_\mu$ is the $\mu^\text{th}$ sample of $X$, obtained by randomly choosing a pair of indices $(I_\mu, J_\mu)$ (c.f. step 1). In expectation, $\mathbb{E}(Y) = \mathbb{E}(X_\mu)=F_e(\mathcal{E}, \mathcal{U})$, as the expectation value of $X_\mu$ over the random choices of $(I_\mu, J_\mu)$ is given by:

\begin{align}
    \mathbb{E}(X_\mu) &= \sum_\mu \frac{\chi_\mathcal{E}(I_\mu, J_\mu)}{\chi_\mathcal{U}(I_\mu, J_\mu)} \frac{\chi_\mathcal{U}(I_\mu, J_\mu)^2}{2^{2n}} \\
    &= F_e(\mathcal{E}, \mathcal{U}). \nonumber
\end{align}

Because direct measurement of superoperator elements $\chi_\mathcal{E}(I_\mu, J_\mu)$ is not possible, each $X_\mu$ must be approximated with an estimator $\tilde{X}_\mu$, obtained from performing $m_\mu$ `iterations' of a random circuit (steps 3--5) and averaging the collected data (step 6). Equation \eqref{eq:bmunu} shows that over measurement outcomes, the quantity $B_{\mu \nu}$ (constructed in step 5) averages to $\chi_\mathcal{E}(I_\mu, J_\mu)$. Therefore, the expectation value of each $\tilde{X}_\mu$ is:
\begin{equation}\mathbb{E}( \tilde{X}_\mu ) = \frac{\chi_\mathcal{E}(I_\mu, J_\mu)}{\chi_\mathcal{U}(I_\mu, J_\mu)} = X_\mu. \end{equation}

Finally, we define a third estimator $\tilde{Y} = \frac{1}{l} \sum_\mu \tilde{X}_\mu$, which satisfies $\mathbb{E}(\tilde{Y}) = Y$ and is experimentally accessible through steps 1-7. Algorithm \ref{box:algorithm} may thus be seen as the composition of two successive estimations, returning an estimate $\tilde{Y}$ of $Y$. In turn, $Y$ is an estimate of the entanglement fidelity. In expectation we have:
\begin{equation}
    \mathbb{E}_{\mu} [ \mathbb{E}_\nu (\tilde{Y}) ] = \mathbb{E}_\mu (Y) = F_e(\mathcal{E}, \mathcal{U}).
\end{equation}

\

\begin{tcolorbox}[title = Algorithm 1: Matchgate Fidelity Estimation, colback=white, colframe=camblue, fonttitle=\bfseries, title filled=false]
    \begin{algorithm}[H]
        \DontPrintSemicolon
        \For{$\mu$ \KwTo\Range{$l$}} {
            \ \\
            1. Sample a pair of index subsets 
            $(i_1 \hdots i_k, j_1 \hdots j_k)_\mu = (I_\mu, J_\mu)$
             via Algorithm \ref{box:algorithm2}, with probability:
            \begin{equation}
            \text{Pr}(I, J) = 2^{-2n} \left( \chi_\mathcal{U}(I, J) \right)^2. \nonumber \end{equation}
            2. Convert the corresponding Clifford monomials into Pauli operators as:
            \begin{align}
                (c_{i_1} \hdots c_{i_k})_\mu ^\dagger &= \phi^*_{I_\mu} \mathbb{P}_{I_\mu}, \nonumber \\ (c_{j_1} \hdots c_{j_k})_\mu &= \phi_{J_\mu} \mathbb{P}_{J_\mu}, \nonumber
            \end{align}
            and store the product of phases as:
            \begin{equation} \phi_{\mu} = \phi^*_{I_\mu} \times \phi_{J_\mu} = \pm 1. \nonumber \end{equation}
            \For{$\nu$ \KwTo\Range{$m_\mu$}} {
            3. Prepare a random eigenstate $| \lambda_{\mu \nu} \rangle$ of $\mathbb{P}_{J_\mu}$ with probability $2^{-n}$, storing its eigenvalue $\lambda_{\mu \nu} = \pm 1$. \;
            4. Apply the matchgate circuit $\mathcal{E}$. \;
            5. Make single-qubit Pauli ($\mathbb{P}_{I_\mu}$) measurements on $q \in \mathrm{supp}(\mathbb{P}_{I_\mu})$ to obtain outcomes $c_q = \pm 1$. Compute $A_{\mu \nu}$, where:
            \begin{equation}
            A_{\mu \nu} = \Pi_{q \in \mathrm{supp}(\mathbb{P}_{I_\mu})} c_q, \nonumber \end{equation}
            and compute the product $B_{\mu \nu}$:
            \begin{equation}
            B_{\mu \nu} = A_{\mu \nu} \lambda_{\mu \nu} \phi_{\mu} = \pm 1. \nonumber \end{equation}
            }
            6. Compute the estimator $\tilde{X}_\mu$:
            \begin{equation}
                \tilde{X}_\mu = \frac{1}{\chi_\mathcal{U}(I_\mu, J_\mu) m_\mu} \sum_{\nu = 1}^{m_\mu} B_{\mu \nu}. \nonumber \end{equation}
        }
        7. Compute the estimator $\tilde{Y}$: 
        \begin{equation*}
        \tilde{Y} = \frac{1}{l}\sum_{\mu = 1}^l \tilde{X}_\mu = \sum_{\mu = 1}^l \sum_{\nu = 1}^{m_\mu} \frac{A_{\mu \nu} \lambda_{\mu \nu} \phi_{\mu}}{\chi_\mathcal{U}(I_\mu, J_\mu) m_\mu l}. \end{equation*}
    \KwRet $\tilde{Y}$
    \label{box:algorithm}{}
    \end{algorithm}
\end{tcolorbox}

\begin{widetext}
    \begin{align}
        \mathbb{E}(B_{\mu \nu}) &= \sum_{k=1}^{2^n} \frac{1}{2^n} \lambda^{(k)}_{\mu \nu} \phi_\mu \left( \text{Pr}(A^{(k)}_{\mu \nu} = +1) - \text{Pr}(A^{(k)}_{\mu \nu} = -1) \right) \label{eq:bmunu} \\
        &= \frac{\phi_\mu}{2^n} \sum_{k=1}^{2^n} \lambda^{(k)}_{\mu \nu} \  \text{Tr} \left( \mathbb{P}^\dagger_{I_\mu} \mathcal{E}(| \lambda^{(k)}_{\mu \nu} \rangle \langle \lambda^{(k)}_{\mu \nu} |) \right) \nonumber \\
        &= \frac{\phi_\mu}{2^n} \text{Tr} \left( \mathbb{P}^\dagger_{I_\mu} \mathcal{E}(\mathbb{P}_{J_\mu}) \right) \nonumber \\
        & = \chi_\mathcal{E}(I_\mu, J_\mu). \nonumber
    \end{align}
\end{widetext}

Steps 3--5 (repetitions over $\nu$) bring $\tilde{Y}$ to within $\pm \epsilon$ of $Y$ (with failure probability $\delta$), whereas taking samples of $\tilde{X}_\mu$ (repetitions over $\mu$) brings $Y$ $\epsilon$-close to $F_e(\mathcal{E}, \mathcal{U})$. Overall, we obtain an estimate of the entanglement fidelity with a probability $1 - 2\delta$ of lying within $\pm 2\epsilon$ of its true value.

\textit{Well-Conditioning.} We give two bounds for $l$: one for the general case, where no assumption is made about the structure of $\hat{\mathcal{U}}$ or the size of its elements, and another for the case where $|\chi_\mathcal{U}(I, J)| \geq \alpha$ for all non-zero elements. The latter is the `well-conditioning' property from \cite{Flammia_2011}. Whilst it is true that all Clifford gates are well-conditioned (with $\alpha = 1$), a matchgate circuit can correspond to an arbitrarily small rotation with a small value of $\alpha$. Determining this value for large systems is an intractable problem, as it requires sorting exponentially many non-zero elements of $\hat{\mathcal{U}}$. For smaller systems, we provide guidance on how to isolate the well-conditioning parameter in Appendix \ref{appendixc}. 

\section{Efficient Superoperator Index Sampling} \label{sec:sup_sampling}

A major bottleneck in applying direct fidelity estimation to arbitrary unitaries is the classical sampling \cite{Barber_2025} of indices $(I, J)$, i.e. step 1 of Algorithm \ref{box:algorithm}. In general, this requires (i) classically pre-computing $2^{4n}$ matrix elements $\chi_\mathcal{U}(I, J)$, and (ii) sampling from the resulting (exponentially large) probability distribution $\text{Pr}(I, J) = 2^{-2n} |\chi_\mathcal{U}(I, J)|^2$. Furthermore, as the quantity $ 2^{-2n} |\chi_\mathcal{U}(I, J)|^2$ is the mod-square of a normalised trace (the estimation of which is \textsf{DQC1}-complete \cite{Knill_1998, shepherd2006}) performing the sampling step for generic unitaries is likely intractable. Here, we show that when $\mathcal{U}$ is a matchgate unitary, the classical sampling can be performed efficiently. Thus, the classical component of Algorithm \ref{box:algorithm} is efficient, achieving the claimed exponential speedup over the original DFE protocol.

\begin{lemma}
For any matchgate circuit $\mathcal{U}$, given $R \in SO(2n)$, each superoperator element $\chi_\mathcal{U}(I, J)$ can be computed exactly in $\mathcal{O}(n^3)$ time.  
\end{lemma}
\begin{proof}
From Theorem \ref{thm:matchgatesupop}, we have that $\chi_\mathcal{U}(I, J) = \delta_{|I|, |J|}\det|R_{I, J}|$, where $R_{I, J}$ is the submatrix of $R$ formed by taking rows in $I$ and columns in $J$. Therefore, each $R_{I, J}$ is a $k \times k$ matrix, where $k = |I| = |J| \leq 2n$. Thus, the determinant can be computed in $\mathcal{O}(k^3) \leq \mathcal{O}(n^3)$ time using standard methods. 
\end{proof}

\begin{theorem} \label{thm:matchgate_sampling}
For any matchgate circuit $\mathcal{U}$, classically sampling from the probability distribution $\text{Pr}(I, J) = 2^{-2n} |\chi_\mathcal{U}(I, J)|^2$ can be achieved in $\mathcal{O}(n^3)$ time.
\end{theorem}
\begin{proof}
First, we note that the superoperator matrix $\hat{\mathcal{U}}$ is block diagonal, with blocks corresponding to monomial degrees $k = |I| = |J|$. We begin by calculating the squared-element sum $S_k$ for a degree-$k$ block:
\begin{equation}
    S_k = \sum_{|I|=|J|=k} |\chi_\mathcal{U}(I, J)|^2 = \sum_{|I|=|J|=k} |\det(R_{I, J})|^2.
\end{equation}
By the Cauchy-Binet formula (Theorem \ref{thm:compound_matrices}), the elements $\det(R_{I, J}) = [C_k(R)]_{I, J} \in SO(k)$ form an orthogonal compound matrix. Therefore, we may write:
\begin{align}
S_k &=  \sum_{I, J} [C_k(R)]^2_{I, J} \\
&= \text{Tr}(C_k(R)^\dagger C_k(R)) \nonumber\\
&= \text{Tr}(C_k(R^\dagger R)) = \binom{2n}{k}. \nonumber
\end{align}

\noindent We thus begin by drawing a degree $k$ with probability $\text{Pr}(k) = S_k / 2^{2n}$, i.e. $k \sim \mathrm{Binomial}(2n, 1/2)$. Next, for a sampled degree $k$, we draw the row index $I$ with uniform probability among the $\binom{2n}{k}$ subsets of $[2n]$ of size $k$. To see that this is a uniform distribution, note that:
\begin{align}
\text{Pr}(I | k) &= \frac{1}{S_k} \sum_{|J|=k} |\chi_\mathcal{U}(I, J)|^2 \\
& = \frac{1}{S_k} \sum_{|J|=k} [C_k(R)]^2_{I, J} \nonumber\\
& = \frac{1}{S_k} \sum_{|J|=k} [C_k(R)^\dagger C_k(R)]_{I, I} \nonumber\\
& = \frac{1}{S_k} [C_k(\mathbb{1}_{2n})]_{I, I} = \frac{1}{S_k}.\nonumber
\end{align}

Finally, for a sampled index $I$ of degree $k$, we draw the column index $J$ with probability $\text{Pr}(J | I, k) = |\chi_\mathcal{U}(I, J)|^2$, i.e. sample $J$ from the $I^\text{th}$ row of the degree-$k$ block of $\hat{\mathcal{U}}$. This distribution is still factorially large; we now show that it can be sampled from efficiently using a determinantal point process (DPP) \cite{Kulesza_2012, Hough_2006, derezinski2019, derezinski2020}.

We begin by constructing a $k \times 2n$ matrix $R_{I, [2n]}$, composed of rows in $R$ contained in $I$. Next, construct a $2n \times 2n$ matrix $K^{(I)}= R_{I, [2n]}^T R_{I, [2n]}$ which is real, symmetric, and idempotent: 
\begin{align}
(K^{(I)})^2 &= R_{I, [2n]}^T \left( R_{I, [2n]} R_{I, [2n]}^T \right) R_{I, [2n]} \\
&= R_{I, [2n]}^T \mathbb{1}_k R_{I, [2n]} = K^{(I)}, \nonumber
\end{align}

\noindent where the second line follows from the fact that rows of $R_{I, [2n]}$ are orthonormal, so $ R_{I, [2n]} R_{I, [2n]}^T = \mathbb{1}_k$. In fact, $K^{(I)}$ is a projection matrix onto the $k$-dimensional row space of $R_{I, [2n]}$, guaranteeing it to be positive semidefinite. For a $k$-sized subset $J \subseteq [2n]$ one may construct the $k \times k$ submatrix $K_{J, J}^{(I)}$ by preserving the rows and columns in $J$. Observe that:

\begin{align}
[K^{(I)}]_{j_1, j_2} &= [R_{I, [2n]}^T R_{I, [2n]}]_{j_1, j_2} \\
& = \sum_{i \in I} [R_{I, [2n]}^T]_{j_1, i} [R_{I, [2n]}]_{i, j_2} \nonumber \\
& = \sum_{i \in I} [R_{I, [2n]}]_{i, j_1} [R_{I, [2n]}]_{i, j_2} \nonumber,
\end{align}

\noindent from which it follows that $K_{J, J}^{(I)} = R_{I, J}^T R_{I, J}$, where $R_{I, J}$ is the $k \times k$ submatrix of $R$ as defined in Theorem \ref{thm:matchgatesupop}. Therefore, we have:

\begin{align}
    \det(K_{J, J}^{(I)}) &= \det(R_{I, J}^T R_{I, J}) \\
    &= \det(R_{I, J}^T) \det(R_{I, J}) \nonumber \\
    &= |\det(R_{I, J})|^2 = |\chi_\mathcal{U}(I, J)|^2. \nonumber
\end{align}

\noindent Because $K^{(I)}$ is symmetric, real and positive semidefinite, we may sample $J$ from the distribution $\text{Pr}(J | I, k) = \det(K_{J, J}^{(I)})$ using a DPP sampling algorithm \cite{Hough_2006, derezinski2019} with the kernel $K^{(I)}$. Overall:
\begin{equation}
\text{Pr}(J | I, k) = \det(K_{J, J}^{(I)}) = |\chi_\mathcal{U}(I, J)|^2. \label{eq:dpp_determinant}
\end{equation} 

\noindent Notably, the DPP algorithm runs in $\mathcal{O}(n^3)$ time \cite{Hough_2006}, whereas sampling from the binomial distribution for $k$ and the uniform distribution for $I$ can be achieved in $\mathcal{O}(n)$ time. More information on DPPs and the sampling algorithm is provided in Appendix \ref{app:dpp}.
\end{proof}

Overall, to sample $(I,J)$ we may proceed as follows: first set $I = \emptyset$, and perform $2n$ fair coin flips, setting $I \leftarrow I \cup \{c\}$ if the $c^\text{th}$ coin flip is heads. This gives $\mathrm{Pr}(I, k) = \mathrm{Pr}(I | k) \mathrm{Pr}(k) = 2^{-2n}$. Then, construct the matrix $K^{(I)}$ for the sampled $I$, and use it as the kernel for DPP sampling (Algorithm \ref{box:algorithm3} in Appendix \ref{app:dpp}). Each step runs in at most $\mathcal{O}(n^3)$ time, so the overall sampling procedure is efficient. A summary of our procedure is presented in Algorithm \ref{box:algorithm2}.

\

\begin{tcolorbox}[title = Algorithm 2: Superoperator Index Sampling, colback=white, colframe=camblue, fonttitle=\bfseries, title filled=false]
    \begin{algorithm}[H]
        \DontPrintSemicolon
        1. Set $I = \emptyset$. \\
        \For{$c$ \KwTo\Range{$2n$}} {
            2. Draw $q \sim \text{Bernoulli}(1/2)$. \\
             \If{$q = 1$}{3. Set $I \leftarrow I \cup \{c\}$.} }
        4. Construct the DPP kernel $K^{(I)}$: 
        \begin{equation} K^{(I)} = R_{I, [2n]}^T R_{I, [2n]}. \nonumber \end{equation} \\
        5. Draw $J \sim \text{DPP}(K^{(I)})$ via Algorithm \ref{box:algorithm3}. \\
        \KwRet $(I, J)$ 
    \label{box:algorithm2}{}
    \end{algorithm}
\end{tcolorbox}

    \begin{center}
        \defcitealias{Helsen_2022}{HNRW'20}
        \defcitealias{PRXQuantum.2.010351}{CRW'20}
        \begin{figure*}[t]
        \begin{tabular}{| c || c | c | c |} 

         \hline
          &  \textbf{Present Work} & \textbf{\citetalias{Helsen_2022}} & \textbf{\citetalias{PRXQuantum.2.010351}} \\ [0.5ex] 
         \hline\hline

         \textbf{Outcome:} & \footnotesize $F(\mathcal{E,U})$: the fidelity between $\mathcal{U}$ & \multicolumn{2}{|c|}{ \footnotesize The fidelity $F(\mathcal{E})$ of the average error channel $\mathcal{E}$, acting as} \\
         & \footnotesize  and its noisy implementation $\mathcal{E}$. & \multicolumn{2}{|c|}{ \footnotesize $\mathcal{E} \circ \mathcal{U}$ on all MGC channels $\mathcal{U}$. Gives the bound $F(\mathcal{\mathcal{E} \circ \mathcal{U}}) \leq F(\mathcal{E})$.}  \\ [0.5ex]
         \hline

         \textbf{Assumptions:} & \footnotesize Noiseless Pauli basis SPAM & \multicolumn{2}{|c|}{\footnotesize Robust against SPAM noise.}  \\
         & \footnotesize (State Preparation and Measurement). & \multicolumn{2}{|c|}{\footnotesize Assumes a gate-independent noise channel $\mathcal{E}$.} \\ [0.5ex]
         \hline

         \textbf{Applied} & \footnotesize Matchgate circuit, & \multicolumn{2}{|c|}{ \footnotesize Matchgate circuits with Haar random} \\
         \textbf{Gates:} & \footnotesize Pauli SPAM. & \multicolumn{2}{|c|}{ \footnotesize $R \in SO(2n)$, Pauli SPAM.} \\ [0.5ex]
         \hline

         \textbf{Noise} & \footnotesize PL Superoperator elements $\chi_\mathcal{E}(I, J):$ & \multicolumn{2}{|c|}{\footnotesize Decay parameters $\lambda_k$, $k \in [2n]$:}  \\
         \textbf{Parameters:}& \footnotesize $\chi_\mathcal{E}(I, J) = \text{Tr}(c_I^\dagger \mathcal{E}(c_J)) / 2^n$. & \multicolumn{2}{|c|}{\footnotesize $\lambda_k = \text{Tr}(\hat{P}_k \hat{\mathcal{E}}) / {2n \choose k}$ for irrep projectors $\{\hat{P}_0, ..., \hat{P}_{2n} \}$.} \\
         &\footnotesize $F_e(\mathcal{E, U}) = 2^{-2n} \sum_{I, J} \chi^*_\mathcal{U}(I, J) \chi_\mathcal{E}(I, J).$ & \multicolumn{2}{|c|}{ \footnotesize $F_e(\mathcal{E}) = 2^{-2n}  \sum_{k=0}^{2n} {2n \choose k} \lambda_k$.} \\ [0.5ex]
         \hline

         \textbf{Noise} & \footnotesize Prepare random eigenstates of $c_J$,& \footnotesize $2n + 1$ polynomial & \footnotesize $n$ polynomial fittings\\
         \textbf{Parameter} & \footnotesize apply the channel $\mathcal{E}$ and measure & \footnotesize fittings of & \footnotesize of $f_k(N) = \sum_{l=1}^2 C_{kl} \lambda^N_{kl}$,\\
         \textbf{Estimation:} & \footnotesize in the $c^\dagger_I$ basis. Repeat many times. & \footnotesize $f_k(N) = C_k \lambda_k^N$.&\footnotesize one fitting of $f_0(N) = C_0 \lambda^N_0$.\\ [0.5ex]
         \hline

         \textbf{Procedure:} & \footnotesize Algorithm \ref{box:algorithm}. & \multicolumn{2}{|c|}{ \footnotesize (i) Pick $k \in [2n]$ (or $k \in [n]$). (ii) Prepare a Pauli eigenstate, } \\
         & & \multicolumn{2}{|c|}{ \footnotesize apply $\mathcal{O}(N)$ randomly sampled MGs, measure in Pauli basis.} \\
         & & \multicolumn{2}{|c|}{ \footnotesize (iii) Repeat (ii) $M$ times. (iv) Repeat (ii) - (iii) for different $N$. } \\
         & & \multicolumn{2}{|c|}{ \footnotesize (v) Repeat (i) - (iv) for different $k$. (vi) Compute functions $\hat{f}_k(N)$} \\ 
         & & \multicolumn{2}{|c|}{ \footnotesize from measurement data, and polynomially fit to find the decay} \\
         & & \multicolumn{2}{|c|}{ \footnotesize parameters $\lambda_k$. (vii) Estimate $F(\mathcal{E})$ using Equation \eqref{eqn:helsen_fidelity}. } \\ [0.5ex]
         \hline

         \textbf{Runtime} & \footnotesize $l$ samples (choices of SPAM basis), & \footnotesize \ \ \ \ $2n+1$ values of $k \in [2n],$ \ \ \ \ & \footnotesize $n + 1$ values of $k \in [n]$,  \\
         \textbf{Parameters:} &\footnotesize $m_\mu$ iterations (random circuit & \footnotesize repetitions $M$, & \footnotesize repetitions $M$, \\
         & \footnotesize repetitions). & \footnotesize sequence lengths $N$. & \footnotesize sequence lengths $N$. \\ [0.5ex]
         \hline 
        
         \textbf{Shot Scaling:} & \footnotesize $\mathbb{E}(m) \leq \mathcal{O}(1 / \epsilon^2 \delta + 2^{2n} \ln(1/\delta) / \sqrt{n} \epsilon^2)$ & \multicolumn{2}{|c|}{\footnotesize No analytic bounds provided.} \\ [0.5ex]
            \hline

        \ \textbf{Sampling Cost:} \ & \footnotesize $\mathcal{O}(n^3)$ & \multicolumn{2}{|c|}{\footnotesize $\mathcal{O}(\mathrm{poly}(n))$} \\ [0.5ex]
            \hline

         \textbf{Benchmarking} & \footnotesize Same procedure. Shot bound decreases: & \multicolumn{2}{|c|}{\footnotesize Benchmark $XX(\theta)$ gates via $XX(\theta) = XY(\theta / 2) X_1 XY(\theta / 2) X_1$,} \\
         \textbf{XY Gates:} & \footnotesize $\mathbb{E}(m) \leq \mathcal{O}(1 / \epsilon^2 \delta + 2^{2n} \ln(1/\delta) / n \epsilon^2)$ & \multicolumn{2}{|c|}{\footnotesize Assume $X_1$ is noiseless and $XY(\theta/2)$ noise is multiplicative.} \\[0.5ex]
         \hline

        \end{tabular}
        \caption{\textit{Comparison of Matchgate Benchmarking Protocols.} $F_e(\mathcal{E, U})$ and $F_e(\mathcal{E}) = F_e(\mathcal{E}, \mathbb{1} )$ are the entanglement fidelities, related to channel fidelity via Equation \eqref{eq:fidelity_def}.}
        \label{table:1}
        \end{figure*}
    \end{center}

\section{Comparison to existing methods}

Prior to our work, two methods for benchmarking matchgates have been published; one due to J. Claes et al.~\citep{PRXQuantum.2.010351}, and the other due to J. Helsen et al.~\citep{Helsen_2022}. The two methods are similar in their approach, making use of the group structure of nearest-neighbour $G(A,B)$ gates via randomised benchmarking (RB). This is achieved by repeated preparation of random circuits, with random elements of the matchgate group applied at variable circuit depths. After collecting measurements, data is polynomially fitted to extract error parameters $\lambda$, which are then used to estimate the fidelity of the gate set. These approaches assume that the gate noise being measured is independent of the particular gates $U \in G$ applied, i.e. constant across the matchgate group (when this assumption is relaxed, the deviations from the gate-independent noise model are suppressed, so the resulting noise information may still be seen as an average). Comparisons between the three methods are given in Figure \ref{table:1}. 

A recent addition to the literature is the work of A. Chapman and S. Flammia \cite{chapman2025}, which proposes an efficient method for learning parameters of group-averaged, gate-dependent noise channels for collections of matchgates. Similarly to RB-based methods this approach utilises matchgate twirling and consequently returns average channel parameters, although the noise is allowed to vary between different group elements.

In contrast, our procedure allows one to directly benchmark each instance of $U$ without averaging over $G$. It does not require polynomial fitting to estimate the relevant error parameters -- the entanglement fidelity estimate $F_e(\mathcal{E, U})$ is given directly from the constructed $\tilde{Y}$. By virtue of Theorem \ref{thm:matchgate_sampling} the sampling component is also simpler, never requiring drawing from the Haar measure. However, our approach has lower robustness to SPAM noise, though only Pauli state preparation and measurement are required. We believe our method is advantageous in its simplicity and speed, and will benefit practitioners who wish to benchmark gates at some particular parameter $\theta$, for example the $XX(\theta)$ or $XY(\theta)$ gates in the context of calibration. Our approach is also scalable, and if $\alpha$ is additionally identified then the shot number bound is no longer dependent on system size $n$, opening up the possibility of fast benchmarking of matchgate circuits on many qubits.

\subsection{Decay Parameters \texorpdfstring{$\lambda$}{lambda} and the Fidelity}

The guiding principles between RB-based techniques and our work are broadly similar. The goal of randomised benchmarking is to identify the average fidelity of a constant (or average) error channel $\mathcal{E}$:

\begin{equation}
F(\mathcal{E}) = F(\mathcal{E}, \mathbb{1}) = \frac{2^n F_e(\mathcal{E}, \mathbb{1}) + 1}{2^n + 1}.
\end{equation}

\noindent The entanglement fidelity $F_e(\mathcal{E}, \mathbb{1})$ is then the trace of the error channel:

\begin{equation}
    F_e(\mathcal{E}, \mathbb{1}) = \frac{1}{2^{2n}}\text{Tr}(\hat{\mathcal{E}}).
\end{equation}

\noindent The authors of \citep{Helsen_2022} make use of projection matrices $\{ \hat{P}_k \}$, which project onto the invariant subspaces $\{ \Gamma_k \}$ of irreducible representations of the matchgate group. Inserting $\sum_{k=0}^{2n} \hat{P}_k = \hat{\mathbb{1}}$ into the above, they obtain:

\begin{align}
    \frac{1}{2^{2n}}\text{Tr}(\hat{\mathcal{E}}) &= \frac{1}{2^{2n}} \sum_{k=0}^{2n} \text{Tr}(\hat{P}_k \hat{\mathcal{E}}) \label{eqn:helsen_fidelity}  \\
    &= \frac{1}{2^{2n}} \sum_{k=0}^{2n} {2n \choose k} \lambda_k, \nonumber 
\end{align}

\noindent where:

\begin{equation}
    \lambda_k = \frac{\text{Tr}(\hat{P}_k \hat{\mathcal{E}})}{\text{Tr}(\hat{P}_k)} = \frac{1}{{2n \choose k}} \text{Tr}(\hat{P}_k \hat{\mathcal{E}}).
\end{equation}

\noindent The subspaces projected onto by $\hat{P}_k$ are exactly the block-diagonals of $\hat{\mathcal{U}}$ as described in Theorem \ref{thm:matchgatesupop}. Comparing our expression to Equation \eqref{eq:mg_fidelity_sum}, we can relate the decay parameters $\lambda_k$ to the partial sums:

\begin{align}
    \lambda_k' &= \frac{1}{{2n \choose k}} \sum_{|I| = |J| = k} \chi_\mathcal{E}(I, J) \chi_\mathcal{U}(I, J)  \\
    &= \frac{1}{{2n \choose k}} \text{Tr}(\hat{P}_k \hat{\mathcal{U}}^\dagger \hat{\mathcal{E}}), \nonumber
\end{align}

\noindent which evaluate to $F_e(\mathcal{E, U})$ through Equation \eqref{eqn:helsen_fidelity}. Our procedure finds `gate-specific' decay parameters $\lambda_k'$, which encode information about the fidelity between a particularly chosen MGC and its noisy implementation. This is in contrast to the decay parameters of \citep{Helsen_2022, PRXQuantum.2.010351,chapman2025} which characterise the noise channel under a matchgate twirl.

\begin{figure*}[t!]
    \subfloat{ \includegraphics[width=.5\linewidth]{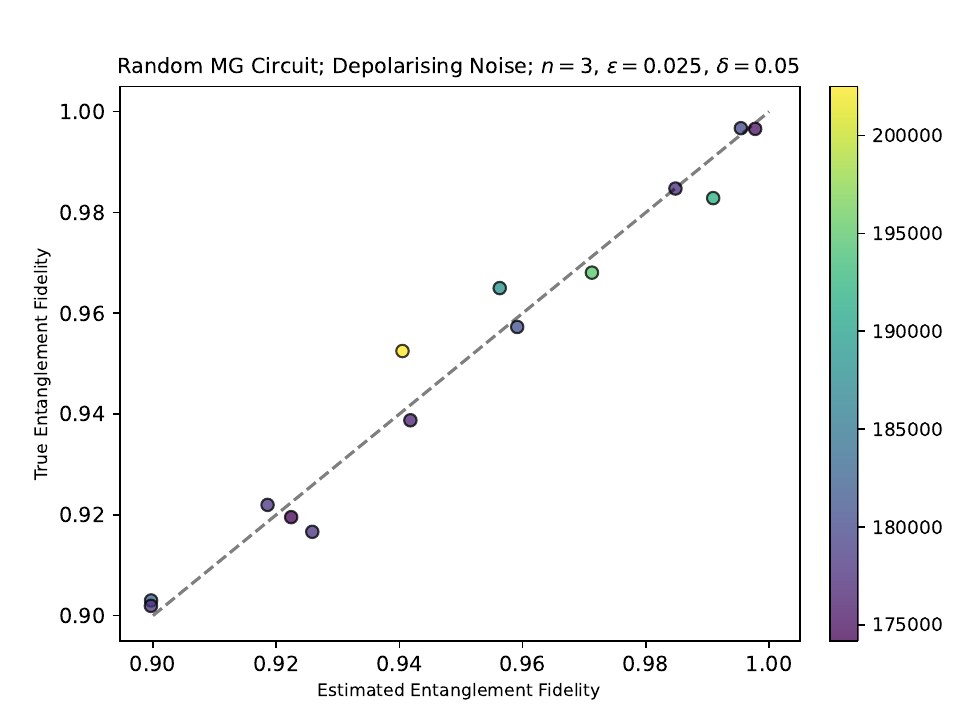} }
    \subfloat{ \includegraphics[width=.5\linewidth]{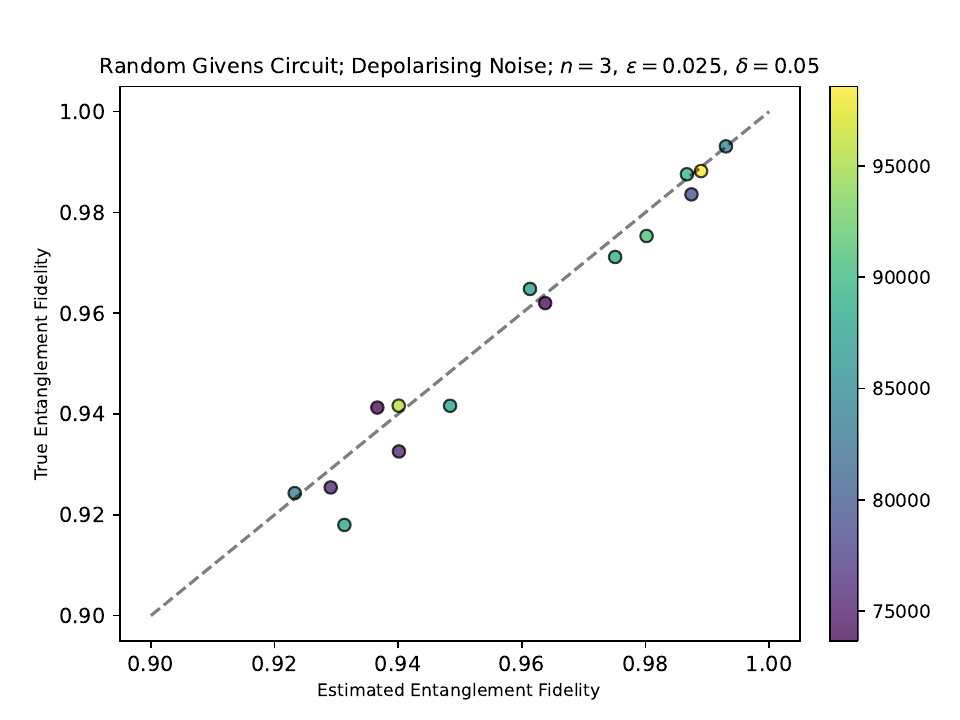} }
    \caption{\textit{Simulation of Algorithm \ref{box:algorithm} for $n=3$ Qubits.} Each point is the output of Algorithm \ref{box:algorithm} applied to a Haar-randomly sampled circuit followed by an $n$-qubit depolarising channel. Colour maps indicate shot numbers in each run. Restricting to nearest-neighbour Givens rotations is seen to decrease the shot number by about $1/\sqrt{3}$, in agreement with Section \ref{sec:xygroup}. Code used to simulate our algorithm is provided in \cite{mgrepo}.}
    \label{fig:simulations1}
\end{figure*}

\subsection{The XY Group}

\label{sec:xygroup}

Despite their technical sophistication, RB-based approaches tend to struggle with subgroups of the matchgate group (for example, $XY(\theta)$ gates or Givens rotations). In general, the group formed by matchgate circuits has irreps appearing with multiplicities, which complicate identifying the decay parameters $\lambda_k$. The authors of \citep{Helsen_2022} avoid this issue by expanding the group to include a bit-flip $X$ on the final qubit, obtaining a multiplicity-free decomposition. They then propose benchmarking $XY(\theta)$ gates indirectly, by preparing $XX$ gates through $XX(\theta) = XY(\theta / 2) (X \otimes \mathbb{1}) XY(\theta / 2) (X \otimes \mathbb{1})$, and assuming that (i) the single-qubit gates are noiseless and (ii) the noise from the $XY$ gates is multiplicative.
On the other hand, the method in \cite{PRXQuantum.2.010351} handles multiplicities directly by fitting functions of $a_k$ degree-$N$ polynomials $\sum_i^k C_{ij} \lambda_{ij}^N$ (where $a_k$ is the multiplicity of the $k^\text{th}$ irrep) to the data. For the matchgate group, these multiplicities are at most $2$, though for the $XY$ subgroup they grow with the qubit number $n$. The authors find that because of this scaling, their proposed method of multiplicity benchmarking fails for the $XY$ group -- so neither is feasible for directly estimating the fidelity of $XY(\theta)$ gates.

In contrast, our method is able to benchmark $XY(\theta)$ circuits without additional difficulty, with an asymptotic $1/\sqrt{n}$ speedup compared to general MGCs. This is because for such circuits, the rotation matrix $R$ takes the form (under reshuffling of rows and columns):

\begin{equation}
    R \cong \tilde{R} \oplus \tilde{R}', \quad \tilde{R} \in SO(n), 
\end{equation}

\noindent where $\tilde{R}'_{ij} = (-1)^{i+j} \tilde{R}_{ij}$. Similarly, for $U$ composed of nearest-neighbour Givens rotations exclusively:

\begin{equation}
R = \tilde{R} \otimes \mathbb{1}_2, \quad \tilde{R} \in SO(n).
\end{equation}

\noindent These observations follow from considering the quadratic Hamiltonians of nearest-neighbour $XY$ gates and Givens rotations, which generate dynamical Lie algebras that split into two copies of $\mathfrak{so}(n)$ (see \cite{K_kc__2022, Wiersema_2024}). In fact, for any matchgate subgroup for which $R \cong A \oplus B$ for $A, B \in SO(n)$ we can show that the non-zero element bound reduces by a factor of $1/\sqrt{n}$. First, write $I = I_A \cup I_B$ and $J = J_A \cup J_B$, where $I_A, J_A \subseteq \{1, \ldots, n\}$ and $I_B, J_B \subseteq \{n+1, \ldots, 2n\}$. Any degree-$k$ minor $\det(R_{I, J})$ can only be non-zero if $|I_A| = |J_A| = m$ and $|I_B| = |J_B| = k - m$, otherwise $R_{I, J}$ would contain linearly dependent rows or columns. Then, the number of non-zero elements in $\hat{\mathcal{U}}$ is upper bounded by:
\begin{align}
    \#[\chi_\mathcal{U}(I,J) \neq 0] & \leq \sum_{k=0}^{2n} \sum_{m = 0}^{k} {n \choose m}^2 {n \choose k - m}^2 \\
    &\leq \sum_{m = 0}^{n} \sum_{l = 0}^{n} {n \choose m}^2 {n \choose l}^2 \nonumber\\
    &\leq {2n \choose n}^2 \nonumber \\
    &\leq (2^{2n} / \sqrt{n})^2 \leq 2^{4n} / n, \nonumber
\end{align}

\noindent which is a multiplicative $1/\sqrt{n}$ reduction compared to the generic matchgate case (Equation \eqref{eq:mg_nonzero_bound}), leading to the same reduction in expected shot count (c.f. Equation \eqref{eqn:totexpshotbound}) The full upper bound is given in Figure \ref{table:1}.

\section{Beyond Matchgates: Benchmarking Clifford + MG + Clifford circuits}
\label{sec:cliffordmg}
Our protocol may also be used (without additional overhead) to benchmark circuits of the form $\mathcal{W} = \mathcal{V}_2 \circ \mathcal{U} \circ \mathcal{V}_1$, where $\mathcal{U}$ is a matchgate and $\mathcal{V}_1, \mathcal{V}_2$ are arbitrary Clifford circuits. Because Clifford unitaries form monomial superoperator matrices, one can calculate $\hat{\mathcal{W}}$ directly and proceed with the same instructions ($\hat{\mathcal{W}}$ will have entries from $\hat{\mathcal{U}}$ permuted and multiplied by a phase from $\{ \pm 1, \pm i\}$, preserving its sparsity). However, explicit calculation of $\hat{\mathcal{W}}$ is not necessary. For the new circuit, pairs of $c_{I'}, c_{J'}$ are sampled using the probability distribution:
\begin{equation}
    \text{Pr}(I', J') = 2^{-2n}(\text{Tr}(c_{I'}^\dagger V_2 U V_1 c_{J'} V_1^\dagger U^\dagger V_2^\dagger))^2,
\end{equation}

\noindent which is equal to the original distribution $\text{Pr}(I, J)$ for $c_I = V_2^\dagger c_{I'} V_2$ and $c_J = V_1 c_{J'} V_1^\dagger$. Hence, we may use the original probability distribution, and upon sampling $I, J$ calculate $c_{I'} = V_2 c_I V_2^\dagger$, $c_{J'} = V_1^\dagger c_J V_1$. In the $c_{I'}$ basis the superoperator elements remain invariant:
\begin{align}
    \chi_\mathcal{W}(I', J') &= \frac{1}{2^n} \text{Tr}(c_{I'}^\dagger V_2 U V_1 c_{J'} V_1^\dagger U^\dagger V_2^\dagger)  \\
    &= \frac{1}{2^n} \text{Tr}(c_I^\dagger U  c_J  U^\dagger)\nonumber \\
    &= \chi_\mathcal{U}(I, J). \nonumber
\end{align}

Because the transformed operators $c_{I'}$ and $c_{J'}$ are also Pauli strings, we may estimate $\chi_\mathcal{E}(I',J')$ by preparing eigenstates of $c_{J'}$, applying $\mathcal{W}$, and measuring in the $c_{I'}$ basis. This is effectively a passive transformation on the original algorithm, introducing no further computational cost other than identifying the new Pauli strings $c_{I'}$ and $c_{J'}$ (which may be done efficiently using stabilizer simulation methods). This is summarised by the following modification of Algorithm \ref{box:algorithm}:

\begin{itemize}
    \item[${2^*}.$] \textit{If the MGC is preceded by a Clifford circuit $V_1$ and succeeded by another Clifford circuit $V_2$, the Clifford monomials are converted into Pauli operators as:}
        \begin{equation*}
            V_2 c_I^\dagger V_2^\dagger = \phi^*_{I_\mu} \mathbb{P}_{I_\mu}, \hspace{0.5cm}
            V_1^\dagger c_J V_1 = \phi_{J_\mu} \mathbb{P}_{J_\mu}.
        \end{equation*}
    \textit{The rest of the algorithm proceeds as before.}
\end{itemize}

Interestingly, the full circuit $V_2 U V_1$ need not be an MGC or a Clifford unitary for the modification to work. However, in a more general case of MGCs intertwined by Clifford circuits of the form $W = V_{k+1} U_k \cdots U_2 V_2 U_1 V_1$ the argument will not hold, unless it is possible to `commute out' the $V_i$'s to the left and right of the $U_i$'s.

\section{Applications}
\subsection{Efficient Matchgate Tomography}

Our discussion has focused on exploiting the compound matrix structure of $\hat{\mathcal{U}}$ in the context of verifying the action of matchgate unitaries. However, the same structure also simplifies the task of learning about them. Suppose we are given a black box unitary circuit -- which we can apply as many times as we like -- with the promise that it is composed of nearest-neighbour matchgates only. In the general case, learning about the circuit is exponentially hard in the number of qubits, requiring full gate tomography. In the matchgate case the circuit is fully characterised by its $SO(2n)$ matrix $R$, and an estimate of its $4n^2$ elements gives complete information about $U$. By Theorem \ref{thm:matchgatesupop}, these  matrix elements are precisely $\chi_\mathcal{U}(i, j)$ for $i, j \in [2n]$:

\begin{equation}
    R_{ij} = \frac{1}{2^n} \text{Tr}(c_i^\dagger U c_j U^\dagger).
\end{equation}

\noindent We can estimate the $R_{ij}$'s using steps 3--5 of Algorithm \ref{box:algorithm}. This is the Linear Optics Tomography algorithm of \citep{PRXQuantum.3.020328}. We may use it to find the quadratic Hamiltonian elements:

\begin{equation}
    \tilde{h}_{ij} = \frac{1}{4}\log(\tilde{R})_{ij},
\end{equation}

\noindent from which the estimate $\tilde{U} = e^{-i\tilde{H}}$ is computed by Equation \eqref{eq:hamiltonian}. The authors of \citep{PRXQuantum.3.020328} show that to obtain $\lVert \tilde{U} - U \rVert_\diamond \leq \epsilon$ with failure probability $\delta$, $m = \mathcal{O}(n^3 \ln(n / \delta) / \epsilon^2)$  repetitions are sufficient.

\subsection{Numerical Fidelity Estimation}

In addition to benchmarking, the efficient sampling method presented in Theorem \ref{thm:matchgate_sampling} may be applied to numerically estimate the entanglement fidelity $F_e(\mathcal{V, U})$, where $\mathcal{U}$ is an MGC and $\mathcal{V}$ is a unitary channel. To do this we may modify Algorithm \ref{box:algorithm}, with the quantities $\chi_\mathcal{V}(I, J) = 2^{-n} \mathrm{Tr}(c_I^\dagger V c_J V^\dagger)$ calculated directly from the unitary $V$. In this case, the estimator $X$ of Equation \eqref{eq:X_variable} can be calculated exactly, with expectation $\mathbb{E}(X) = F_e(\mathcal{V, U})$ and variance $\mathrm{Var}(X) \leq 1 - F_e(\mathcal{V, U})^2$ (see Appendix \ref{appendixb}). Chebyshev's inequality gives the same sample number $l = \mathcal{O}(1 / \epsilon^2 \delta)$ on the number of repetitions required to estimate $F_e(\mathcal{V, U})$ to within $\pm \epsilon$ with failure probability $\delta$. Taking into account the $\mathcal{O}(n^3)$ cost of sampling $(I, J)$ and computing $\chi_\mathcal{U}(I, J)$, the overall runtime of the estimation is given by:

\begin{equation}
t = \mathcal{O} \left( \frac{1}{\epsilon^2 \delta}( n^3 + T) \right),
\end{equation}

\noindent where $T$ is the cost of computing $\chi_\mathcal{V}(I, J)$. If $\mathcal{V}$ is a matchgate circuit intertwined by a limited number of non-matchgates, then it may be possible to bound $T$ in terms of the non-matchgate count using the techniques of \cite{mocherla2023, Reardon_Smith_2024, Dias_2024, wille2025}. We expect this may find use in the study of doped matchgate circuits \cite{paviglianiti2025, Mele_2025}. 

\section{Examples}

\subsection{Benchmarking fSim Gates}
One may use our algorithm to benchmark the fidelity of Google's~\citep{Foxen_2020} $\mathrm{fSim}(\theta, \phi)$ gates, consisting of a power-of-$i\mathrm{SWAP}$ (MG) and controlled phase (non-MG) parts:

\begin{equation}
    \mathrm{fSim}(\theta, \phi) = \begin{pmatrix}
        1 & 0 & 0 & 0 \\
        0 & \cos \theta & -i \sin \theta & 0 \\
        0 & -i \sin \theta & \cos \theta & 0 \\
        0 & 0 & 0 & e^{i \phi} 
    \end{pmatrix}.
\end{equation}

\noindent When the latter is applied with $\phi = 0$, the $\mathrm{fSim}(\theta, 0)$ operation is an $XY(\theta)$ gate. However, when $\phi \neq 0$ it becomes a $\tilde{G}(A, B)$ gate with a superoperator matrix which is no longer block diagonal. Whilst one may benchmark the $\mathrm{fSim}(\theta, 0)$ and $\mathrm{fSim}(0, \phi)$ gates separately, our method still allows a full benchmark in one sweep -- at the cost of a slightly higher expected shot count than before, dependent on the non-zero element count of the superoperator matrix (c.f. Appendix \ref{appendixb}). We calculate the sparsity to equal $94 / 256 \sim 37\%$, giving a bound on the expected value of $m$ to be:
\begin{equation}
    \mathbb{E}(m) \leq 1 + \frac{1}{\epsilon^2}( 1/\delta + 47 \ln (4/\delta) / 2).
\end{equation}
Furthermore, if $\phi = \pi$ (so that $\mathrm{fSim}(0, \pi)$ is a $\mathrm{CZ}$), by the discussion of Section \ref{sec:cliffordmg} this bound decreases. The full superoperator matrix is given in Figure \ref{fig:fsim_diagram}. 

\subsection{Numerical Simulations}

We provide data from noisy simulations of our protocol in Figure \ref{fig:simulations1}. In order to model average-to-worst case scenarios, random matchgate circuits were generated by sampling $R \in SO(2n)$ out of the Haar distribution, from which the unitary and superoperator matrices $U$, $\hat{\mathcal{U}}$ were calculated. For each choice, random Pauli SPAM circuits were generated following Algorithm \ref{box:algorithm} and simulated using Qiskit Aer to collect data. Upon each application of $U$, a noise channel followed. To model noise we used an all-qubit depolarising channel (additional simulations with all-qubit amplitude damping and all-qubit amplitude + phase damping are available in \cite{mgrepo}). Different choices of $\epsilon, \delta$ affected the average shot counts, which we found to lie well below the derived bounds. In particular, benchmarking circuits of exclusively nearest-neighbour Givens rotations led to the lowest shot counts, in agreement with Section \ref{sec:xygroup}. The same speedup also applies to nearest-neighbour $XY(\theta)$ circuits. Our plots compare the estimated entanglement fidelity $F_e(\mathcal{E, U})$ to the true value -- we find that despite the seemingly high analytic scaling, the number of repetitions required to achieve a good estimate is on par with existing approaches (which reported $\mathcal{O}(10^5)$ shots for $n=2, 3$ qubits \cite{Helsen_2022, PRXQuantum.2.010351}).

Python code and Jupyter notebooks used in our simulations are available in \cite{mgrepo}.

\acknowledgments
We thank Oscar Watts and Christopher Long for feedback on the manuscript, as well as various anonymous referees for highly insightful comments. S.S. acknowledges support from the Royal Society University Research Fellowship and ``Quantum simulation algorithms for quantum chromodynamics'' grant (ST/W006251/1) and EPSRC Reliable and Robust Quantum Computing grant (EP/W032635/1).

\appendix

\begin{figure*}[t]
    \input{fsim_render.tex}
    \caption{\textit{The Superoperator Matrix $\hat{\mathcal{U}}$ for the 2-qubit ${\mathrm{fSim}(\theta, \phi)}$ Gate.} If $\phi = 0$ it is an $XY(\theta)$ gate with a block-diagonal compound PL matrix. If $\theta = 0, \phi = \pi$ it is monomial, with entries from $\{ \pm 1, \pm i\}$. If $\theta \neq 0, \phi = \pi$, the gate is of the form $\hat{\mathcal{W}} = \hat{\mathcal{U}} \circ \hat{\mathcal{V}}$ discussed in Section \ref{sec:cliffordmg}.} 
    \label{fig:fsim_diagram}
\end{figure*}

\section{Properties of the Clifford Basis} \label{appendixa}
To prove \eqref{clifford_and_pauli}, we first note that all Clifford generators are by definition Hermitian. Consider a Clifford monomial $c_I = c_{i_1} c_{i_2} \cdots c_{i_k}$ of degree $k$, and take its Hermitian conjugate:
\begin{equation}
    c_I^\dagger = c_{i_k} \cdots c_{i_2} c_{i_1}.
\end{equation}
As the generators all anticommute, we can bring this expression back into the original order by performing a sequence of transpositions, each of which introduces a  minus sign. This requires $(k-1)$ transpositions for $c_{i_k}$, $(k-2)$ transpositions for $c_{i_{k-1}}$, and so on, the total number being the sum of an arithmetic sequence. Hence, $c_I^\dagger = (-1)^{k(k-1)/2} c_I$. It follows that each element of the Clifford basis is either Hermitian or anti-Hermitian, and this depends on whether $k(k-1)/2$ is even or odd. Starting with $k=0$ the transposition count forms the sequence of triangular numbers $(0, 0, 1, 3, 6, 10, \hdots)$, so with increasing $k$ the basis elements with $|I|=k$ switch parity every two steps. That we can always write $c_I$ as in \eqref{clifford_and_pauli} follows from the definition of the Jordan--Wigner representation; however, the above also implies that $\phi_I \times \phi_J = \pm 1$ so long as $|I| = |J|$, a fact we implicitly use in step 2 of Algorithm \ref{box:algorithm}.

\section{Runtime Parameters} \label{appendixb}
Here we derive expressions for the sample and iteration numbers $l, m_\mu$ required for our protocol. These closely follow the derivations in~\citep{Flammia_2011} with amendments. The main difference is Equation \eqref{eqn:totexpshotbound}, which explicitly links the expected shot count with superoperator sparsity.
\subsection{General Case}
\textit{Bound 1.} As discussed in the main text, we first wish to bound $Y$ to lie $\epsilon$-close to the fidelity with probability $1 - \delta$:
\begin{equation}
    \text{Pr}(|Y - F_e(\mathcal{E, U})| \geq \epsilon) \leq \delta.
\end{equation}
Much like in~\citep{Flammia_2011}, we can bound the variance of each $X_\mu$:
\begin{align}
    \text{Var}(X_\mu) &= \mathbb{E}(X_\mu^2) - \mathbb{E}(X_\mu)^2 \\
    & = \sum_\mu 2^{-2n} \chi_\mathcal{E}(I_\mu, J_\mu)^2 - F_e(\mathcal{E,U})^2 \nonumber \\
    & \leq F_e(\mathcal{E, E}) - F_e(\mathcal{E, U})^2 \leq 1. \nonumber
\end{align}
It follows that $\text{Var}(Y) \leq 1/l$, so we use Chebyshev's inequality:
\begin{equation}
    \text{Pr}(|Y - \mathbb{E}(Y)| \geq \frac{\lambda}{\sqrt{l}}) \leq \frac{1}{\lambda^2},
\end{equation}
with $\delta = 1/\lambda^2$ and $l \geq 1/(\epsilon^2 \delta)$. 

\textit{Bound 2.} We now wish to find the sample numbers $m_\mu$, such that $\tilde{Y}$ is close to $Y$:
\begin{equation}
    \text{Pr}(|\tilde{Y} - Y| \geq \epsilon) \leq \delta.
\end{equation}
To do this, we use Hoeffding's inequality:
\begin{equation}
\text{Pr}(|\tilde{Y} - \mathbb{E}(\tilde{Y})| \geq \epsilon) \leq 2e^{-2 \epsilon^2 / C}, \label{eq:hoeffding}
\end{equation}
Where $\tilde{Z}_{\mu \nu}$ are summands in $\tilde{Y} = \sum_{\mu, \nu} \tilde{Z}_{\mu \nu}$ (c.f. step 7) bounded as $a_{\mu \nu} \leq \tilde{Z}_{\mu \nu} \leq b_{\mu \nu}$, and $C = \sum_{\mu, \nu} (b_{\mu \nu} - a_{\mu \nu})^2$. Each $\tilde{Z}_{\mu \nu}$ satisfies $|\tilde{Z}_{\mu \nu}| \leq 1 / |\chi_\mathcal{U}(I_\mu, J_\mu)| m_\mu l$, therefore
\begin{align}
    C &= \sum_{\mu=1}^l \sum_{\nu=1}^{m_\mu} \frac{4}{\chi_\mathcal{U}(I_\mu, J_\mu)^2 m_\mu^2 l^2} \\
    & = \sum_{\mu=1}^l \frac{4}{\chi_\mathcal{U}(I_\mu, J_\mu)^2 m_\mu l^2}. \nonumber
\end{align}
Thus, choosing $m_\mu = \lceil 2 \ln(2/\delta) / \chi_\mathcal{U}(I_\mu, J_\mu)^2 l \epsilon^2 \rceil$ ensures that the RHS of \eqref{eq:hoeffding} equals $\delta$.

\textit{Bound 3.} Finally, we bound the expected total number of shots $\mathbb{E}(m)$, where $m = \sum_{\mu=1}^{l} m_\mu$. Since $m \ge 0$, Markov's inequality gives, for any
$\eta \in (0,1)$,
\begin{equation}
  \Pr ( m \le \tfrac{1}{\eta}\,\mathbb{E}(m) ) \ge 1 - \eta,
\end{equation}
so with probability at least $1-\eta$ the realised number of shots exceeds
$\mathbb{E}(m)$ by at most a factor $1/\eta$. We therefore bound $\mathbb{E}(m_\mu)$:
\begin{align}
    \mathbb{E}(m_\mu) &= \sum_\mu \text{Pr}(I_\mu, J_\mu) m_\mu \label{eqn:expshotbound}\\ 
    &\leq \sum_\mu \text{Pr}(I_\mu, J_\mu) \left(1 + \frac{4 \ln(4 / \delta)}{\chi_\mathcal{U}(I_\mu, J_\mu)^2 l \epsilon^2}\right) \nonumber \\
    &\leq 1 + \frac{1}{2^{2n}} \sum_\mu \frac{4 \ln(4 / \delta)}{l \epsilon^2}. \nonumber 
\end{align}
Hence $\mathbb{E}(m)$ is bounded as:
\begin{align}
    \mathbb{E}(m) &\leq l \cdot \mathbb{E}(m_\mu) \label{eqn:totexpshotbound} \\
    &\leq l + \frac{1}{2^{2n}} \sum_\mu \frac{4 \ln(4 / \delta)}{ \epsilon^2} \nonumber \\
    &\leq 1 + \frac{1}{\epsilon^2 \delta} + \frac{\#[\chi_\mathcal{U}(I,J)  \neq 0]}{2^{2n}} \frac{4 \ln(4 / \delta)}{ \epsilon^2}, \nonumber
\end{align}
Where $\#[\chi_\mathcal{U}(I,J) \neq 0]$ is the number of non-zero elements in the superoperator matrix $\hat{\mathcal{U}}$. From Theorem \ref{thm:matchgatesupop} we know this to be $\leq 2^{4n} / \sqrt{n}$ for all MGs, giving the general expectation value bound stated in the main text (we make no sharper concentration
claim; a high-probability bound matching $\mathbb{E}(m)$ up to constant factors would require controlling the variance of $\sum_\mu m_\mu$, which we leave to
future work). If $R$ has a particular structure we can bound $\mathbb{E}(m)$ further. For example, if $R$ is a 2D rotation, i.e. $R = \text{diag}(\mathbb{1}_{k-2}, R_{[2 \times 2]}, \mathbb{1}_{2n - k})$, then $\#[\chi_\mathcal{U}(I,J)  \neq 0] = \frac{3}{2}2^{2n}$, leading to a bound independent of both $n$ and $\theta$. Thus, in some cases (i.e. when $\alpha \ll 1$) it may be preferable to use the general bound to determine $l$ and $m_\mu$.

\subsection{Well-conditioned Case}

From the properties of compound matrices it follows that $\hat{\mathcal{U}}$ is an orthogonal matrix; hence, its elements $|\chi_\mathcal{U}(I, J)| \leq 1$ (in fact, this is true for all superoperators). If we are in knowledge of the well-conditioning parameter $\alpha$, such that all non-zero elements $|\chi_\mathcal{U}(I, J)| \geq \alpha$, then we can use the Hoeffding inequality for both bounds, leading to a bound for $m$ independent of system size. When finding $l$, from the above discussion it follows that the summands $Z_\mu = \frac{1}{l} X_\mu$ in $Y = \sum_\mu Z_\mu$ are bounded as $|Z_\mu| \leq \frac{1}{\alpha l}$. Hence, 
\begin{equation}
    C = \sum_{\mu=1}^l \frac{4}{\alpha^2 l^2} = \frac{4}{\alpha^2 l}. 
\end{equation}
Choosing $l = \lceil 2 \ln(2/\delta) / \alpha^2 \epsilon^2 \rceil$ then ensures the RHS of Hoeffding's inequality equals $\delta$. The expression for $m_\mu$ remains the same, and we can use the well-conditioned property again to show that $m_\mu \leq 1 + 2\ln(2/\delta)/\alpha^2 \epsilon^2 l$, and thus $m \leq m_\mu l \leq 4\ln(2/\delta)/\alpha^2 \epsilon^2$. Unlike the general case, this bound is deterministic.  

\section{Decomposition of Superoperators} \label{appendixc}
Here we give some guidance on how to calculate $\hat{\mathcal{U}}$ and identify the well-conditioning parameter $\alpha$ for modestly sized MG circuits. To do this, we make use of the following result \cite{meckes_2019}:

\begin{theorem} For any $R \in SO(n)$, there exists a set of generalised Euler angles $\{ \theta^k_j \}_{1 \leq j \leq k \leq n-1}$, where:
    \begin{align}
        1 &\leq k \leq n-1, &  1 &\leq j \leq k, 
    \end{align}
    with 
    \begin{align*}
        0 &\leq \theta^k_1 < 2 \pi, &  0 &\leq \theta^k_j < \pi \ \text{ for } j \neq 1.
    \end{align*}
    $R$ is then given by:
    \begin{equation}
        R = R^{(n-1)} \cdots R^{(1)},
    \end{equation}
    where 
    \begin{equation}
        R^{(k)} = R_1 (\theta^k_1) \cdots R_k(\theta^k_k),
    \end{equation}
    and
    \begin{equation}
        R_k(\theta) = \begin{pmatrix}
        \mathbb{1}_{k-1} &  &  &  \\
         & \cos \theta & \sin \theta &  \\
         & -\sin \theta & \cos \theta &  \\
         &  &  & \mathbb{1}_{n-k-1}
        \end{pmatrix}
    \end{equation}
    The Euler angles are unique, except when some $\theta^k_j$ is $0$ or $\pi$ (for $j \geq 2$).
\end{theorem}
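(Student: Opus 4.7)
The plan is to prove the decomposition by induction on $n$, with uniqueness following from careful control of the angle ranges at each step. Because this is a classical result about $SO(n)$, the matchgate framework is not needed; only linear algebra is required.

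For the base case $n=2$, any element of $SO(2)$ equals $R_1(\theta)$ for a unique $\theta \in [0, 2\pi)$, matching the statement trivially.

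For the inductive step, assume the result for $SO(n-1)$. Given $R \in SO(n)$, the strategy is to construct the outermost factor $R^{(n-1)} = R_1(\theta^{n-1}_1) \cdots R_{n-1}(\theta^{n-1}_{n-1})$ so that $(R^{(n-1)})^{T} R$ fixes the last standard basis vector $e_n$, and hence sits inside a block-embedded copy of $SO(n-1)$ to which induction applies. To do this I look at the last column $v = R e_n$, a unit vector in $\mathbb{R}^n$, and zero its entries from the bottom up: choose $\theta^{n-1}_{n-1}$ so that $R_{n-1}(\theta^{n-1}_{n-1})^{T} v$ has a zero in position $n$ and a nonnegative entry in position $n-1$; then choose $\theta^{n-1}_{n-2}$ to zero position $n-1$ while leaving position $n$ zero; iterate through $\theta^{n-1}_j$ for $j$ decreasing down to $2$. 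Each of these intermediate rotations acts on a $2$-vector of known norm with one nonnegative coordinate, so the rotation angle is uniquely determined in $[0, \pi)$. Finally $R_1(\theta^{n-1}_1)$ is used to align the remaining vector (now living in the first two coordinates) with $e_n$, which requires the full range $[0, 2\pi)$. The sign of the resulting entry in position $n$ must be $+1$ rather than $-1$ because $\det R = +1$ and every $R_k$ preserves orientation. Applying the inductive hypothesis to the reduced matrix $\tilde{R} = (R^{(n-1)})^{T} R \in SO(n-1)$ yields the remaining factors $R^{(n-2)}, \ldots, R^{(1)}$ with the stated angle ranges.

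The main obstacle is the bookkeeping behind the uniqueness claim. At each reduction step one has to verify that the entry being rotated onto is genuinely made nonnegative, so that the half-open interval $[0, \pi)$ genuinely pins a unique $\theta^{k}_{j}$ for $j \geq 2$; degeneracy arises precisely when the $2$-vector being rotated is already aligned with a coordinate axis, i.e.\ exactly the stated $\theta^k_j \in \{0, \pi\}$ exception, in which case the subsequent angle is unconstrained. A parameter count provides a useful sanity check: the total number of angles is $\sum_{k=1}^{n-1} k = n(n-1)/2$, which equals $\dim SO(n)$, confirming that the parametrisation is minimal and therefore cannot be further pruned.
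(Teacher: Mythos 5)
Your overall strategy coincides with the paper's: induct on $n$, choose the angles of $R^{(n-1)}$ so that $R^{(n-1)}e_n = Re_n$ (the paper does this forwards, by expanding $R^{(n-1)}e_n$ and recognising the spherical-coordinate parametrisation of the unit sphere), then observe that $(R^{(n-1)})^{-1}R$ fixes $e_n$ and apply the inductive hypothesis to the remaining $SO(n-1)$ block. However, your concrete construction of $R^{(n-1)}$ runs the elimination in the wrong order, and as written it fails. Since $R^{(n-1)} = R_1(\theta^{n-1}_1)\cdots R_{n-1}(\theta^{n-1}_{n-1})$, its inverse applies $R_1(-\theta^{n-1}_1)$ to $v = Re_n$ \emph{first}; the reduction must therefore run top-down: zero coordinate $1$ into coordinate $2$, then coordinate $2$ into coordinate $3$, and so on, terminating at $+e_n$. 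In that order the first rotation $R_1$ acts on an arbitrary $2$-vector and is the one that needs the full range $[0,2\pi)$, while each later rotation acts on a pair whose first entry is an accumulated nonnegative radius, which pins $\theta^{n-1}_j$ to a half-turn for $j\geq 2$ and automatically lands on $+e_n$ at the end. Your bottom-up version applies $R_{n-1}^{T}$ first, which (i) corresponds to the reversed product $R_{n-1}\cdots R_1$ and drives $v$ towards $e_1$ rather than $e_n$, so the reduced block lives in coordinates $2,\dots,n$ and does not nest with the recursion $R^{(n-2)} = R_1\cdots R_{n-2}$; (ii) makes the \emph{first} rotation act on the arbitrary pair $(v_{n-1},v_n)$, which cannot in general be rotated onto the nonnegative half-axis by an angle confined to $[0,\pi)$ --- that requires the full circle; and (iii) ends with the impossible instruction to use $R_1$, which mixes only coordinates $1$ and $2$, to align a vector supported on the first two coordinates with $e_n$ when $n>2$.

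Two smaller points. The appeal to $\det R = +1$ to force the sign $+1$ in position $n$ is a non sequitur: $-e_n$ is a perfectly admissible last column of a special orthogonal matrix, so the determinant cannot fix the sign of a single entry. The correct reason the construction reaches $+e_n$ rather than $-e_n$ is the nonnegativity of the accumulated radius combined with the single full-range angle. And the parameter count $n(n-1)/2 = \dim SO(n)$ is a nice sanity check on the dimension of the parametrisation, but it does not by itself establish uniqueness of the angles for a given $R$; that must still come from the coordinate-by-coordinate argument (equivalently, from uniqueness of spherical coordinates away from the poles, which is exactly the source of the stated $\theta^k_j \in \{0,\pi\}$ exception).
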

\begin{proof}
We will prove this by induction. The result is true for $n=2$; assume it holds for the case of $SO(n-1)$. Then, consider $R^{(n-1)}$ acting on the $n^\text{th}$ unit vector $|n \rangle$:
\begin{align}
    R^{(n-1)} |n \rangle &= R_1(\theta^{n-1}_1) \cdots R_{n-1}(\theta^{n-1}_{n-1}) |n \rangle  \\
    &= \cos(\theta^{n-1}_{n-1}) |n \rangle \nonumber \\
    &+ \sin(\theta^{n-1}_{n-1}) \cos(\theta^{n-1}_{n-2}) |n-1 \rangle \nonumber \\
    &+ \cdots \nonumber \\
    &+ \sin(\theta^{n-1}_{n-1}) \cdots \sin(\theta^{n-1}_{2}) \cos(\theta^{n-1}_{1}) |2 \rangle \nonumber \\
    &+ \sin(\theta^{n-1}_{n-1}) \cdots \sin(\theta^{n-1}_{2}) \sin(\theta^{n-1}_{1}) |1 \rangle, \nonumber
\end{align}
Which is the unique representation of a point in $\mathbb{R}^n$ in spherical coordinates. Hence, 
\begin{equation}
    R^{(n-1)} |n \rangle = R |n \rangle,
\end{equation}
and so 
\begin{equation}
    (R^{(n-1)})^{-1} R = \begin{pmatrix} \tilde{R} & 0 \\ 0 & 1 \end{pmatrix},
\end{equation}
and by the induction hypothesis $\tilde{R}$ belongs to $SO(n-1)$. Hence, we can obtain $R$ by multiplying both sides by $R^{(n-1)}$.
\end{proof}
We can use this procedure in reverse to obtain the angles $\{ \theta^k_j\}$ for a given $R$. First, solve for $\{\theta^{n-1}_j\}$'s using the rightmost column of the matrix. Then, left-multiply by $(R^{(n-1)})^{-1} = R_{n-1}(-\theta^{n-1}_{n-1}) \cdots R_1(-\theta^1_1)$, solve for $\{\theta^{n-2}_j\}$'s, and repeat until all of the Euler angles are found. Other methods also exist, such as the algorithm of Hoffman et al. \citep{1972JMP....13..528H}.

We may also use a few important properties of compound matrices~\citep{Kravvaritis:2009aa, NAMBIAR2001251}:
\begin{theorem} \label{thm:compound_matrices} Let $A$ and $B$ be $n \times n$ matrices. The following statements are true:
    \begin{itemize}
        \item $C_r(AB) = C_r(A) C_r(B)$ (Cauchy-Binet Formula);
        \item $C_r(A^\dagger) = C_r(A)^\dagger$;
        \item If $\det A \neq 0$, then $C_r(A^{-1}) = C_r(A)^{-1}$;
        \item If $A$ is: Triangular, Diagonal, Orthogonal, Unitary, (Anti-) Symmetric, (Anti-) Hermitian, or Positive (semi-) definite, then so is $C_r(A)$.
    \end{itemize}
\end{theorem}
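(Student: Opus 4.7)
The plan is to extract everything from one foundational identity, the Cauchy-Binet expansion of $\det((AB)_{I,J})$, and then cascade the remaining items. Recall that by definition $(C_r(M))_{I,J} = \det(M_{I,J})$ for ordered $r$-subsets $I,J \subseteq [n]$. For the first bullet, I would write out $((AB)_{I,J})_{ab} = \sum_{k=1}^n A_{i_a,k} B_{k,j_b}$, apply the Leibniz formula to $\det((AB)_{I,J})$, and regroup by the multi-index $(k_1,\ldots,k_r)$. Multi-indices with a repeated entry contribute zero because the corresponding minor of $A$ has two equal columns; the remaining terms, grouped by the underlying ordered subset $K = \{k_1 < \cdots < k_r\}$, reassemble into $\det(A_{I,K})\det(B_{K,J})$ after absorbing the sign of the sorting permutation. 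Summing over $K$ gives precisely $(C_r(A) C_r(B))_{I,J}$.

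The adjoint property then follows by a direct index chase: $(C_r(A^\dagger))_{I,J} = \det((A^\dagger)_{I,J}) = \det((A_{J,I})^\dagger) = \overline{\det(A_{J,I})} = (C_r(A)^\dagger)_{I,J}$. The inverse property is an immediate consequence of Cauchy-Binet applied to $A A^{-1} = I_n$, combined with the evident identity $C_r(I_n) = I_{\binom{n}{r}}$ (since the $r \times r$ minors of the identity are Kronecker deltas $\delta_{I,J}$).

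With these three tools, each structural property reduces to a one-line combination. Orthogonality and unitarity: $C_r(A) C_r(A)^\dagger = C_r(A) C_r(A^\dagger) = C_r(AA^\dagger) = I$. Symmetry and Hermiticity: $C_r(A)^\dagger = C_r(A^\dagger) = C_r(A)$. The anti-symmetric/anti-Hermitian case requires the auxiliary observation that $C_r(-A) = (-1)^r C_r(A)$ (by $r$-fold multilinearity of the minor in its rows), so one obtains $C_r(A)^\dagger = (-1)^r C_r(A)$, giving the corresponding parity-dependent structure. Positive (semi-)definiteness follows from factoring $A = B^\dagger B$ and writing $C_r(A) = C_r(B)^\dagger C_r(B)$, which is manifestly PSD and is PD when $B$ is invertible (since then $C_r(B)$ is invertible by the inverse bullet). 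For triangularity and diagonality, I would argue entry-by-entry: if $A$ is upper triangular, then for lexicographically ordered $I,J$ with $I \not\preceq J$ one can find a row of $A_{I,J}$ whose non-zero structure forces a zero column in any Leibniz summand, yielding $\det(A_{I,J}) = 0$; diagonality is the stronger statement where only $I = J$ survives.

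The main obstacle is purely bookkeeping — specifically, the careful sign accounting in the Cauchy-Binet collapse from arbitrary multi-indices to ordered subsets, and the ordering argument for the triangular case. No item requires any tool beyond the Leibniz determinant formula and multilinearity, so the proof is essentially a guided traversal of classical multilinear algebra with Cauchy-Binet as the hub.
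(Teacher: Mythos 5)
The paper does not actually prove Theorem~\ref{thm:compound_matrices}; it is quoted as a known result with citations to the compound-matrix literature, so there is no in-paper argument to compare against. Your self-contained derivation is the standard one and is correct: Cauchy--Binet via the Leibniz expansion and regrouping over ordered $r$-subsets (with repeated indices killed by column repetition), the adjoint identity by the index chase $\det((A^\dagger)_{I,J}) = \overline{\det(A_{J,I})}$, the inverse from $C_r(AA^{-1}) = C_r(I_n) = I_{\binom{n}{r}}$, and the structural properties as one-line corollaries. Your treatment of the triangular case is the right idea but is the one place that deserves to be made precise: the clean statement is that for upper-triangular $A$, $\det(A_{I,J}) \neq 0$ forces $i_k \leq j_k$ for every $k$ (otherwise the submatrix contains an $(r-k+1)\times k$ zero block, which kills the determinant), and this entrywise domination implies $I \leq J$ lexicographically. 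One genuinely valuable observation in your write-up: the relation $C_r(A)^\dagger = (-1)^r C_r(A)$ for anti-Hermitian $A$ shows that the anti-symmetric/anti-Hermitian bullet of the theorem as stated is only literally true for odd $r$ (for even $r$ the compound is Hermitian/symmetric instead); this is a real, if minor, imprecision in the quoted theorem, and it is harmless for the paper since that bullet is never used downstream -- only orthogonality of $C_r(R)$ for $R \in SO(2n)$ is needed in Appendix~\ref{appendixc}.
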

From the above two theorems, it follows that any MG superoperator can be written as the product of $n(2n-1)$ MG superoperators:
\begin{equation}
    \hat{\mathcal{U}} = \hat{\mathcal{U}}^{(2n-1)} \cdots \hat{\mathcal{U}}^{(1)},
\end{equation}
with each $\hat{\mathcal{U}}^{(k)}$ given by:
\begin{equation}
    \hat{\mathcal{U}}^{(k)} = \hat{\mathcal{U}}_1(\theta^k_1) \cdots \hat{\mathcal{U}}_k(\theta^k_k).
\end{equation}
Here, $\hat{\mathcal{U}}_k(\theta)$ is a block-diagonal matrix of compound matrices of $R_k(\theta)$:
\begin{equation}
    \hat{\mathcal{U}}_k(\theta) = \bigoplus_{i=0}^{2n} C_i(R_k(\theta)).
\end{equation}
Each $\hat{\mathcal{U}}_k(\theta)$ is a $2^{2n} \times 2^{2n}$ matrix, and we will show that it contains at most $\frac{3}{2}2^{2n}$ non-zero elements, which we can generate using simple rules. Furthermore, it is diagonal for $\theta = 0, \pi$, so in practice one can compute $\hat{\mathcal{U}}$ as well as $\alpha$ using the following procedure: 
\begin{enumerate}
    \item From $R$, find the Euler Angles $\{ \theta^k_j \}$.
    \item For each $\theta^k_j$, generate the non-zero elements of $\hat{\mathcal{U}}_k(\theta^k_j)$.
    \item Using sparse matrix multiplication methods, perform at most $n(2n-1)$ matrix multiplications to obtain $\hat{\mathcal{U}}$.
    \item \textit{(Optional)} Find $\alpha$ from the non-zero elements of $\hat{\mathcal{U}}$ using a sorting algorithm. 
\end{enumerate}
Although the above procedure may be used to simplify the calculation of $\hat{\mathcal{U}}$ (and optionally identify $\alpha$), it is nonetheless inefficient, relying on multiplication of an exponentially-sized matrix. Therefore, for larger systems we expect the non-well-conditioned case to be more practical, which remains scalable by Algorithm \ref{box:algorithm2}. 

We end this section with the rule for generating non-zero elements of $\hat{\mathcal{U}}_k(\theta)$:
\begin{theorem}
    The matrix elements $\chi_{\mathcal{U}_k}(I, J)$ of $\hat{\mathcal{U}}_k(\theta)$ satisfy the following:
    \begin{itemize}
        \item If $I = \emptyset$ or $I = \{ 1, \cdots, 2n \}$, then $\chi_{\mathcal{U}_k}(I, I) = 1$;
        \item If $|I|=|J|=1$, i.e. $I = \{ i \}$, $J = \{ j \}$, then $\chi_{\mathcal{U}_k}(I, J) = [R_k(\theta)]_{ij}$;
        \item If $I \ni k$ and $I \ni k+1$, or $I \niton k$ and  $I \niton k+1$, then $\chi_{\mathcal{U}_k}(I, I) = 1$;
        \item If $I \ni k$ and $I \niton k+1$, or $I \niton k$ and  $I \ni k + 1$, then $\chi_{\mathcal{U}_k}(I, I) = \cos \theta$;
        \item If $I<J$ (in lexicographic ordering), and $I, J$ differ on only one index (so that $I \ni k, I \niton k+1$ and $J \niton k, J \ni k+1$), then $\chi_{\mathcal{U}_k}(I, J) = \sin \theta$;
        \item If $I>J$, and $I, J$ differ on only one index (so that $I \niton k, I \ni k+1$ and $J \ni k, J \niton k+1$), then $\chi_{\mathcal{U}_k}(I, J) = -\sin \theta$;
        \item In all other cases, $\chi_{\mathcal{U}_k}(I, J) = 0$.
    \end{itemize}

\end{theorem}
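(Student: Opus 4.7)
The plan is to reduce the theorem to a finite case analysis of Theorem~\ref{thm:matchgatesupop} applied to the elementary rotation $R_k(\theta)$. By that theorem, $\chi_{\mathcal{U}_k}(I,J)=\delta_{|I|,|J|}\det R_k(\theta)_{I,J}$, so the entire statement is about minors of $R_k(\theta)$. Since $R_k(\theta)$ differs from the $2n\times 2n$ identity only in its $2\times 2$ block at rows/columns $k$ and $k+1$, most of these minors collapse immediately, and the seven bullets should simply be a bookkeeping exercise.

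First I would use a zero-row argument to restrict which $(I,J)$ can produce a non-vanishing minor. For every $i\in I\setminus\{k,k+1\}$, the row of $R_k(\theta)_{I,J}$ labelled by $i$ has entries $(R_k(\theta))_{ij}=\delta_{ij}$ for $j\in J$, so this row is zero unless $i\in J$. An identical column argument forces
\begin{equation*}
I\setminus\{k,k+1\}\;=\;J\setminus\{k,k+1\},
\end{equation*}
which eliminates every possibility outside those listed in the theorem and reduces the problem to a $4\times 4$ case analysis in $(I\cap\{k,k+1\},\,J\cap\{k,k+1\})$.

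Next I would walk through those subcases. When $I\cap\{k,k+1\}=J\cap\{k,k+1\}$, the submatrix is block-diagonal with the ``active'' block being either the full $2\times 2$ rotation (determinant $1$), absent (determinant $1$), or the single diagonal scalar $\cos\theta$, which reproduces bullets two through four; the first bullet is the special case $I\in\{\emptyset,[2n]\}$, covered by $C_0(R_k(\theta))=1$ and $C_{2n}(R_k(\theta))=\det R_k(\theta)=1$. The only step that requires care — and hence the main obstacle — is the off-diagonal case with $I\cap\{k,k+1\}=\{k\}$ and $J\cap\{k,k+1\}=\{k+1\}$. Here I would show that $k$ occupies the same sorted position $p$ in $I$ as $k+1$ does in $J$, which follows because every shared index is either strictly less than $k$ or strictly greater than $k+1$ (a shared value of $k+1$ is forbidden by $k+1\notin I$, and symmetrically for $k$). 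Consequently the $(p,p)$ entry of the submatrix is $(R_k(\theta))_{k,k+1}=\sin\theta$ with no permutation sign, the remaining diagonal entries are $1$, and all other entries vanish, so $\det R_k(\theta)_{I,J}=\sin\theta$. The symmetric case $I>J$ uses $(R_k(\theta))_{k+1,k}=-\sin\theta$ and closes the argument. Everything else reduces to direct application of Theorem~\ref{thm:matchgatesupop}.
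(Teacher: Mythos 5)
Your proposal is correct and follows the same route the paper intends: the paper's proof is a one-line assertion that the bullets follow from ``considering the different structures of submatrices \ldots and calculating their determinants,'' and you carry out exactly that case analysis via Theorem~\ref{thm:matchgatesupop}. You also supply the two details the paper elides --- the zero-row/zero-column argument forcing $I\setminus\{k,k+1\}=J\setminus\{k,k+1\}$, and the observation that $k$ and $k+1$ occupy the same sorted position so the off-diagonal minor carries no permutation sign --- both of which check out against the stated form of $R_k(\theta)$.
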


\begin{proof}
    All cases can be easily verified by considering the different structures of submatrices formed when rows in $I, J$ are preserved, and calculating their determinants. Counting the number of non-zero elements, we obtain:
\begin{align}
    \# [\chi_{\mathcal{U}_k}(I, J) = 1] &= \frac{1}{2} 2^{2n} \\
    \# [\chi_{\mathcal{U}_k}(I, J) = \cos \theta] &= \frac{1}{2} 2^{2n} \nonumber \\
    \# [\chi_{\mathcal{U}_k}(I, J) = \pm \sin \theta] &= \frac{1}{4} 2^{2n} \nonumber
\end{align}
Giving at most $\frac{3}{2} 2^{2n}$ non-zero elements.
\end{proof}

\section{Determinantal Point Processes} \label{app:dpp}
Here, we recall the relevant properties of determinantal point processes for their use in Algorithm \ref{box:algorithm2}. We will restrict our attention to finite ground sets only. For a detailed introduction to DPPs, we recommend the review by Kulesza and Taskar \citep{Kulesza_2012}.

A \textit{determinantal point process} (DPP) on a finite ground set $[2n]$ is a probability distribution over the $2^{2n}$ random subsets $Y \subseteq [2n]$. It is characterised by a single $2n \times 2n$ \textit{marginal kernel} $K$. When $Y$ is drawn from the DPP, for every fixed $A \subseteq [2n]$, the probability that $A$ is contained in $Y$ is given by the determinant of the principal submatrix $K_{A, A}$ of $K$ indexed by $A$:

\begin{equation}
    \Pr(A \subseteq Y) = \det(K_{A, A}),
    \label{eq:dppmarginal}
\end{equation}

\noindent where $K_{A, A}$ is a submatrix of $K$ in which rows and columns corresponding to elements in $A$ are retained, just as in Theorem \ref{thm:compound_matrices}. For Equation \eqref{eq:dppmarginal} to produce valid probabilities, all principal minors of $K$ must be non-negative. Typically, this is ensured by requiring $K$ to be a real, symmetric and positive semidefinite matrix with eigenvalues in $[0, 1]$. DPPs have a natural connection to fermionic systems \cite{Macchi1975TheCA, Cunden_2018}, as the determinant structure encodes negative correlations reminiscent of the Pauli exclusion principle. For example, for a pair of elements $i, j \in [2n]$, the probability that both are contained in $Y$ is given by $\Pr(i, j \in Y) = K_{ii}K_{jj} - K_{ij}^2$, which is always less than or equal to the product of individual probabilities $\Pr(i \in Y)\Pr(j \in Y) = K_{ii}K_{jj}$. Thus, correlations between elements are always non-positive and `spread-out' subsets are favoured.

Typically, the probability of exactly drawing a subset $Y$ is not given directly by Equation \eqref{eq:dppmarginal}, but instead requires an $L$-\textit{ensemble} for which the atomic probabilities $\Pr(Y = J) \propto \det(L_{J,J})$, with $L$ a real, symmetric matrix related to $K$. In our case, we will not need to construct an $L$-ensemble, as our kernel $K^{(I)}= R_{I, [2n]}^T R_{I, [2n]}$ defines an \textit{elementary} DPP. This is one of the simplest instances of a determinantal point process, obeying the following properties \cite{Kulesza_2012}:
\begin{itemize}
    \item The kernel $K$ satisfies $K = \sum_{\mathbf{v} \in \mathcal{V}} \mathbf{v} \mathbf{v}^T$ for some set of orthonormal vectors $\mathcal{V}$, and is therefore a projection matrix with eigenvalues in $\{0, 1\}$.
    \item If $Y$ is drawn from the elementary DPP, then:
    \begin{equation}
    \Pr(|Y| = k) = 1, \nonumber
    \end{equation} 
    \noindent where $k = |\mathcal{V}| = \mathrm{rank}(K) = \mathrm{Tr}(K)$.
\end{itemize}

\noindent Because for any $Y$ drawn from the elementary DPP we have that $|Y| = k$ almost surely, Equation \eqref{eq:dppmarginal} collapses to the atomic probability $\Pr(Y = J) = \det(K^{(I)}_{J, J})$ referred to in Equation \eqref{eq:dpp_determinant}. Therefore, using a DPP sampling algorithm with the kernel $K^{(I)}$ will sample $J \sim P(J | I, k)$ with the correct target weights $|\chi_{\mathcal{U}}(I, J)|^2$. Constructing the kernel $K^{(I)}$ for an index $I$ sampled in the first stage of Algorithm \ref{box:algorithm2} is efficient, requiring $\mathcal{O}(n^3)$ time. Similarly, the DPP sampling algorithm of Hough, Krishnapur, Peres and Vir\'ag \cite{Hough_2006} runs in $\mathcal{O}(n^3)$ time. We now present the DPP sampling algorithm.

First, set $J = \emptyset$ and pick an orthonormal basis $\mathrm{ONB}(\mathcal{V}) = \{|v_1\rangle, \dots, |v_k\rangle\}$ from the range of $K^{(I)}$. This can be done by taking the $k$ rows of $R_{I, [2n]}$ which are orthonormal by construction. Begin by drawing an element $y \in [2n]$ with probability $\Pr(y) = \frac{1}{\dim \mathcal{V}} \sum_{|v_i \rangle \in \mathrm{ONB}(\mathcal{V})} | \langle v_i | y \rangle |^2$, where $|y \rangle$ is a standard unit vector. Then, set $J \leftarrow J \cup \{y\}$ and replace $\mathcal{V}$ by the orthogonal complement of $|y \rangle$ in $\mathcal{V}$ using the Gram--Schmidt procedure, reducing $\dim \mathcal{V}$ by one. Repeat this process until $\dim \mathcal{V} = 0$, at which point $J$ will contain $k$ elements drawn from the DPP with target weights $\det(K^{(I)}_{J,J})$. The overall complexity of this algorithm is $\mathcal{O}(n^3)$, dominated by the Gram--Schmidt orthogonalisation steps. This is summarised in Algorithm \ref{box:algorithm3}; for a proof of its correctness see \cite{Hough_2006,Kulesza_2012}.

\

\begin{tcolorbox}[title = Algorithm 3: Elementary DPP Sampling, colback=white, colframe=camblue, fonttitle=\bfseries, title filled=false]
    \begin{algorithm}[H]
        \DontPrintSemicolon
        1. Set $\mathcal{V} \leftarrow \mathrm{Im}(K^{(I)})$, $J \leftarrow \emptyset$. \\
        \While{$\dim \mathcal{V} = t > 0$} {
            2. Draw $y \in [2n]$ with probability:
            \begin{equation}
                p(y) = \frac{1}{t} \sum_{|v_i \rangle \in \mathrm{ONB}(\mathcal{V})} | \langle v_i | y \rangle |^2. \nonumber \end{equation} \\
            3. Set $J \leftarrow J \cup \{y\}$. \\
            4. Set $\mathcal{V} \leftarrow \mathcal{V}_\perp$, where $\mathcal{V}_\perp$ is the subspace of $\mathcal{V}$ orthogonal to $|y \rangle$.}
        \Return $J$.
    \label{box:algorithm3}{}
    \end{algorithm}
\end{tcolorbox}

\bibliography{bibliography.bib}

@article{Flammia_2011,
	author = {Steven T. Flammia and Yi-Kai Liu},
	date-added = {2023-07-15 22:42:40 +0100},
	date-modified = {2023-07-15 22:42:40 +0100},
	doi = {10.1103/physrevlett.106.230501},
	journal = {Physical Review Letters},
	month = {jun},
	number = {23},
	publisher = {American Physical Society ({APS})},
	title = {Direct Fidelity Estimation from Few Pauli Measurements},
	url = {https://doi.org/10.1103%2Fphysrevlett.106.230501},
	volume = {106},
	year = 2011,
	bdsk-url-1 = {https://doi.org/10.1103%2Fphysrevlett.106.230501},
	bdsk-url-2 = {https://doi.org/10.1103/physrevlett.106.230501}}

@article{Helsen_2022,
	author = {Jonas Helsen and Sepehr Nezami and Matthew Reagor and Michael Walter},
	date-added = {2023-07-15 22:40:47 +0100},
	date-modified = {2023-07-15 22:40:47 +0100},
	doi = {10.22331/q-2022-02-21-657},
	journal = {Quantum},
	month = {feb},
	pages = {657},
	publisher = {Verein zur Forderung des Open Access Publizierens in den Quantenwissenschaften},
	title = {Matchgate benchmarking: Scalable benchmarking of a continuous family of many-qubit gates},
	url = {https://doi.org/10.22331%2Fq-2022-02-21-657},
	volume = {6},
	year = 2022,
	bdsk-url-1 = {https://doi.org/10.22331%2Fq-2022-02-21-657},
	bdsk-url-2 = {https://doi.org/10.22331/q-2022-02-21-657}}

@article{1972JMP....13..528H,
	adsnote = {Provided by the SAO/NASA Astrophysics Data System},
	adsurl = {https://ui.adsabs.harvard.edu/abs/1972JMP....13..528H},
	author = {{Hoffman}, David K. and {Raffenetti}, Richard C. and {Ruedenberg}, Klaus},
	date-added = {2023-07-15 03:13:09 +0100},
	date-modified = {2023-07-15 03:13:09 +0100},
	doi = {10.1063/1.1666011},
	journal = {Journal of Mathematical Physics},
	month = apr,
	number = {4},
	pages = {528-533},
	title = {{Generalization of Euler Angles to N-Dimensional Orthogonal Matrices}},
	volume = {13},
	year = 1972,
	bdsk-url-1 = {https://doi.org/10.1063/1.1666011}}

@book{meckes_2019,
	author = {Meckes, Elizabeth S.},
	collection = {Cambridge Tracts in Mathematics},
	date-added = {2023-07-15 02:54:13 +0100},
	date-modified = {2023-07-15 02:54:13 +0100},
	doi = {10.1017/9781108303453},
	place = {Cambridge},
	publisher = {Cambridge University Press},
	series = {Cambridge Tracts in Mathematics},
	title = {The Random Matrix Theory of the Classical Compact Groups},
	year = {2019},
	bdsk-url-1 = {https://doi.org/10.1017/9781108303453}}

@article{Kravvaritis:2009aa,
	abstract = {This paper studies the possibility to calculate efficiently compounds of real matrices which have a special form or structure. The usefulness of such an effort lies in the fact that the computation of compound matrices, which is generally noneffective due to its high complexity, is encountered in several applications. A new approach for computing the Singular Value Decompositions (SVD's) of the compounds of a matrix is proposed by establishing the equality (up to a permutation) between the compounds of the SVD of a matrix and the SVD's of the compounds of the matrix. The superiority of the new idea over the standard method is demonstrated. Similar approaches with some limitations can be adopted for other matrix factorizations, too. Furthermore, formulas for the n −1 compounds of Hadamard matrices are derived, which dodge the strenuous computations of the respective numerous large determinants. Finally, a combinatorial counting technique for finding the compounds of diagonal matrices is illustrated.},
	author = {Kravvaritis, Christos and Mitrouli, Marilena},
	date = {2009/02/01},
	date-added = {2023-07-13 02:20:46 +0100},
	date-modified = {2023-07-13 02:20:46 +0100},
	doi = {10.1007/s11075-008-9222-7},
	id = {Kravvaritis2009},
	isbn = {1572-9265},
	journal = {Numerical Algorithms},
	number = {2},
	pages = {155--177},
	title = {Compound matrices: properties, numerical issues and analytical computations},
	url = {https://doi.org/10.1007/s11075-008-9222-7},
	volume = {50},
	year = {2009},
	bdsk-url-1 = {https://doi.org/10.1007/s11075-008-9222-7}}

@article{NAMBIAR2001251,
	abstract = {Apart from the regular and adjugate compounds of a matrix, an inverse compound of a matrix is defined. Theorems of Laplace, Binet-Cauchy, and Jacobi are given in terms of these compounds. Compound matrices allow the statements of the theorems to be terse and precise.},
	author = {K.K. Nambiar and S. Sreevalsan},
	date-added = {2023-07-13 02:20:02 +0100},
	date-modified = {2023-07-13 02:20:02 +0100},
	doi = {https://doi.org/10.1016/S0895-7177(01)00058-9},
	issn = {0895-7177},
	journal = {Mathematical and Computer Modelling},
	keywords = {Inverse compound, Binet-Cauchy theorem, Jacobi's theorem},
	number = {3},
	pages = {251-255},
	title = {Compound matrices and three celebrated theorems},
	url = {https://www.sciencedirect.com/science/article/pii/S0895717701000589},
	volume = {34},
	year = {2001},
	bdsk-url-1 = {https://www.sciencedirect.com/science/article/pii/S0895717701000589},
	bdsk-url-2 = {https://doi.org/10.1016/S0895-7177(01)00058-9}}

@article{PRXQuantum.2.010351,
	author = {Claes, Jahan and Rieffel, Eleanor and Wang, Zhihui},
	doi = {10.1103/PRXQuantum.2.010351},
	issue = {1},
	journal = {PRX Quantum},
	month = {Mar},
	numpages = {26},
	pages = {010351},
	publisher = {American Physical Society},
	title = {Character Randomized Benchmarking for Non-Multiplicity-Free Groups With Applications to Subspace, Leakage, and Matchgate Randomized Benchmarking},
	url = {https://link.aps.org/doi/10.1103/PRXQuantum.2.010351},
	volume = {2},
	year = {2021},
	bdsk-url-1 = {https://link.aps.org/doi/10.1103/PRXQuantum.2.010351},
	bdsk-url-2 = {https://doi.org/10.1103/PRXQuantum.2.010351}}

@article{sung2021realization,
	title={Realization of high-fidelity cz and z z-free iswap gates with a tunable coupler},
	author={Sung, Youngkyu and Ding, Leon and Braum{\"u}ller, Jochen and Veps{\"a}l{\"a}inen, Antti and Kannan, Bharath and Kjaergaard, Morten and Greene, Ami and Samach, Gabriel O and McNally, Chris and Kim, David and others},
	journal={Physical Review X},
	volume={11},
	number={2},
	pages={021058},
	year={2021},
	publisher={APS}
  }

@article{abrams2020implementation,
  title={Implementation of XY entangling gates with a single calibrated pulse},
  author={Abrams, Deanna M and Didier, Nicolas and Johnson, Blake R and Silva, Marcus P da and Ryan, Colm A},
  journal={Nature Electronics},
  volume={3},
  number={12},
  pages={744--750},
  year={2020},
  publisher={Nature Publishing Group UK London}
}

@article{Kjaergaard_2020,
   title={Superconducting Qubits: Current State of Play},
   volume={11},
   ISSN={1947-5462},
   url={http://dx.doi.org/10.1146/annurev-conmatphys-031119-050605},
   DOI={10.1146/annurev-conmatphys-031119-050605},
   number={1},
   journal={Annual Review of Condensed Matter Physics},
   publisher={Annual Reviews},
   author={Kjaergaard, Morten and Schwartz, Mollie E. and Braumüller, Jochen and Krantz, Philip and Wang, Joel I.-J. and Gustavsson, Simon and Oliver, William D.},
   year={2020},
   month=mar, 
   pages={369–395} }

@article{PhysRevApplied.18.064082,
   title = {Optimal Control of a Cavity-Mediated $\mathrm{i}\mathrm{SWAP}$ Gate between Silicon Spin Qubits},
   author = {Young, Steve M. and Jacobson, N. Tobias and Petta, Jason R.},
   journal = {Phys. Rev. Appl.},
   volume = {18},
   issue = {6},
   pages = {064082},
   numpages = {12},
   year = {2022},
   month = {Dec},
   publisher = {American Physical Society},
   doi = {10.1103/PhysRevApplied.18.064082},
   url = {https://link.aps.org/doi/10.1103/PhysRevApplied.18.064082}
 }

@article{grzesiak2020efficient,
	title={Efficient arbitrary simultaneously entangling gates on a trapped-ion quantum computer},
	author={Grzesiak, Nikodem and Bl{\"u}mel, Reinhold and Wright, Kenneth and Beck, Kristin M and Pisenti, Neal C and Li, Ming and Chaplin, Vandiver and Amini, Jason M and Debnath, Shantanu and Chen, Jwo-Sy and others},
	journal={Nature communications},
	volume={11},
	number={1},
	pages={2963},
	year={2020},
	publisher={Nature Publishing Group UK London}
}

@article{Foxen_2020,
	author = {B. Foxen and C. Neill and A. Dunsworth and P. Roushan and B. Chiaro and A. Megrant and J. Kelly and Zijun Chen and K. Satzinger and R. Barends and F. Arute and K. Arya and R. Babbush and D. Bacon and J.{\hspace{0.167em} }C. Bardin and S. Boixo and D. Buell and B. Burkett and Yu Chen and R. Collins and E. Farhi and A. Fowler and C. Gidney and M. Giustina and R. Graff and M. Harrigan and T. Huang and S.{\hspace{0.167em}}V. Isakov and E. Jeffrey and Z. Jiang and D. Kafri and K. Kechedzhi and P. Klimov and A. Korotkov and F. Kostritsa and D. Landhuis and E. Lucero and J. McClean and M. McEwen and X. Mi and M. Mohseni and J.{\hspace{0.167em}}Y. Mutus and O. Naaman and M. Neeley and M. Niu and A. Petukhov and C. Quintana and N. Rubin and D. Sank and V. Smelyanskiy and A. Vainsencher and T.{\hspace{0.167em}}C. White and Z. Yao and P. Yeh and A. Zalcman and H. Neven and J.{\hspace{0.167em}}M. Martinis and},
	doi = {10.1103/physrevlett.125.120504},
	journal = {Physical Review Letters},
	month = {sep},
	number = {12},
	publisher = {American Physical Society ({APS})},
	title = {Demonstrating a Continuous Set of Two-qubit Gates for Near-term Quantum Algorithms},
	url = {https://doi.org/10.1103%2Fphysrevlett.125.120504},
	volume = {125},
	year = 2020,
	bdsk-url-1 = {https://doi.org/10.1103%2Fphysrevlett.125.120504},
	bdsk-url-2 = {https://doi.org/10.1103/physrevlett.125.120504}}

@article{S_rensen_2000,
	author = {Anders S{\o}rensen and Klaus M{\o}lmer},
	doi = {10.1103/physreva.62.022311},
	journal = {Physical Review A},
	month = {jul},
	number = {2},
	publisher = {American Physical Society ({APS})},
	title = {Entanglement and quantum computation with ions in thermal motion},
	url = {https://doi.org/10.1103%2Fphysreva.62.022311},
	volume = {62},
	year = 2000,
	bdsk-url-1 = {https://doi.org/10.1103%2Fphysreva.62.022311},
	bdsk-url-2 = {https://doi.org/10.1103/physreva.62.022311}}

@misc{bravyi2008contraction,
	archiveprefix = {arXiv},
	author = {Sergey Bravyi},
	eprint = {0801.2989},
	primaryclass = {quant-ph},
	title = {Contraction of matchgate tensor networks on non-planar graphs},
	year = {2008}}

@article{DiVincenzo_2005,
	author = {David P. DiVincenzo and Barbara M. Terhal},
	doi = {10.1007/s10701-005-8657-0},
	journal = {Foundations of Physics},
	month = {dec},
	number = {12},
	pages = {1967--1984},
	publisher = {Springer Science and Business Media {LLC}},
	title = {Fermionic Linear Optics Revisited},
	url = {https://doi.org/10.1007%2Fs10701-005-8657-0},
	volume = {35},
	year = 2005,
	bdsk-url-1 = {https://doi.org/10.1007%2Fs10701-005-8657-0},
	bdsk-url-2 = {https://doi.org/10.1007/s10701-005-8657-0}}

@article{Terhal_2002,
	author = {Barbara M. Terhal and David P. DiVincenzo},
	doi = {10.1103/physreva.65.032325},
	journal = {Physical Review A},
	month = {mar},
	number = {3},
	publisher = {American Physical Society ({APS})},
	title = {Classical simulation of noninteracting-fermion quantum circuits},
	url = {https://doi.org/10.1103%2Fphysreva.65.032325},
	volume = {65},
	year = 2002,
	bdsk-url-1 = {https://doi.org/10.1103%2Fphysreva.65.032325},
	bdsk-url-2 = {https://doi.org/10.1103/physreva.65.032325}}

@misc{knill2001fermionic,
	archiveprefix = {arXiv},
	author = {E. Knill},
	eprint = {quant-ph/0108033},
	primaryclass = {quant-ph},
	title = {Fermionic Linear Optics and Matchgates},
	year = {2001}}

@article{doi:10.1137/S0097539700377025,
	abstract = { A model of quantum computation based on unitary matrix operations was introduced by Feynman and Deutsch. It has been asked whether the power of this model exceeds that of classical Turing machines. We show here that a significant class of these quantum computations can be simulated classically in polynomial time. In particular we show that two-bit operations characterized by 4 × 4 matrices in which the sixteen entries obey a set of five polynomial relations can be composed according to certain rules to yield a class of circuits that can be simulated classically in polynomial time. This contrasts with the known universality of two-bit operations and demonstrates that efficient quantum computation of restricted classes is reconcilable with the Polynomial Time Turing Hypothesis. Therefore, it is possible that, The techniques introduced bring the quantum computational model within the realm of algebraic complexity theory. In a manner consistent with one view of quantum physics, the wave function is simulated deterministically, and randomization arises only in the course of making measurements. The results generalize the quantum model in that they do not require the matrices to be unitary. In a different direction these techniques also yield deterministic polynomial time algorithms for the decision and parity problems for certain classes of read-twice Boolean formulae. All our results are based on the use of gates that are defined in terms of their graph matching properties. },
	author = {Valiant, Leslie G.},
	doi = {10.1137/S0097539700377025},
	eprint = {https://doi.org/10.1137/S0097539700377025},
	journal = {SIAM Journal on Computing},
	number = {4},
	pages = {1229-1254},
	title = {Quantum Circuits That Can Be Simulated Classically in Polynomial Time},
	url = {https://doi.org/10.1137/S0097539700377025},
	volume = {31},
	year = {2002},
	bdsk-url-1 = {https://doi.org/10.1137/S0097539700377025}}

@article{Jozsa_2008,
	author = {Richard Jozsa and Akimasa Miyake},
	date-added = {2023-07-05 03:27:49 +0100},
	date-modified = {2023-07-05 03:27:49 +0100},
	doi = {10.1098/rspa.2008.0189},
	journal = {Proceedings of the Royal Society A: Mathematical, Physical and Engineering Sciences},
	month = {jul},
	number = {2100},
	pages = {3089--3106},
	publisher = {The Royal Society},
	title = {Matchgates and classical simulation of quantum circuits},
	url = {https://doi.org/10.1098%2Frspa.2008.0189},
	volume = {464},
	year = 2008,
	bdsk-url-1 = {https://doi.org/10.1098%2Frspa.2008.0189},
	bdsk-url-2 = {https://doi.org/10.1098/rspa.2008.0189}}

@article{Nielsen_2002,
	author = {Michael A Nielsen},
	doi = {10.1016/s0375-9601(02)01272-0},
	journal = {Physics Letters A},
	month = {oct},
	number = {4},
	pages = {249--252},
	publisher = {Elsevier {BV}},
	title = {A simple formula for the average gate fidelity of a quantum dynamical operation},
	url = {https://doi.org/10.1016%2Fs0375-9601%2802%2901272-0},
	volume = {303},
	year = 2002,
	bdsk-url-1 = {https://doi.org/10.1016%2Fs0375-9601%2802%2901272-0},
	bdsk-url-2 = {https://doi.org/10.1016/s0375-9601(02)01272-0}}

@misc{mocherla2023,
	title={Extending Matchgate Simulation Methods to Universal Quantum Circuits}, 
	author={Avinash Mocherla and Lingling Lao and Dan E. Browne},
	year={2023},
	eprint={2302.02654},
	archivePrefix={arXiv}}

@article{PRXQuantum.3.020328,
	title = {Fermion Sampling: A Robust Quantum Computational Advantage Scheme Using Fermionic Linear Optics and Magic Input States},
	author = {Oszmaniec, Micha\l{} and Dangniam, Ninnat and Morales, Mauro E.S. and Zimbor\'as, Zolt\'an},
	journal = {PRX Quantum},
	volume = {3},
	issue = {2},
	pages = {020328},
	numpages = {54},
	year = {2022},
	month = {May},
	publisher = {American Physical Society},
	doi = {10.1103/PRXQuantum.3.020328},
	url = {https://link.aps.org/doi/10.1103/PRXQuantum.3.020328}}

@misc{mgrepo,
	title = {Matchgate Benchmarking - GitHub Repository},
	author = {Jedrzej Burkat},
	year = {2024},
	url = {https://github.com/jedbur/matchgate-benchmarking}}

@misc{kempe2001,
      title={Encoded Universality from a Single Physical Interaction}, 
      author={J. Kempe and D. Bacon and D. P. DiVincenzo and K. B. Whaley},
      year={2001},
      eprint={quant-ph/0112013},
      archivePrefix={arXiv},
      primaryClass={quant-ph},
      url={https://arxiv.org/abs/quant-ph/0112013}, 
}

@article{Chapman_2018,
   title={Classical simulation of quantum circuits by dynamical localization: Analytic results for Pauli-observable scrambling in time-dependent disorder},
   volume={98},
   ISSN={2469-9934},
   url={http://dx.doi.org/10.1103/PhysRevA.98.012309},
   DOI={10.1103/physreva.98.012309},
   number={1},
   journal={Physical Review A},
   publisher={American Physical Society (APS)},
   author={Chapman, Adrian and Miyake, Akimasa},
   year={2018},
   month=jul }

@misc{chapman2025,
      title={Fermionic Averaged Circuit Eigenvalue Sampling}, 
      author={Adrian Chapman and Steven T. Flammia},
      year={2025},
      eprint={2504.01936},
      archivePrefix={arXiv},
      primaryClass={quant-ph},
      url={https://arxiv.org/abs/2504.01936}, 
}

@article{Knill_1998,
   title={Power of One Bit of Quantum Information},
   volume={81},
   ISSN={1079-7114},
   url={http://dx.doi.org/10.1103/PhysRevLett.81.5672},
   DOI={10.1103/physrevlett.81.5672},
   number={25},
   journal={Physical Review Letters},
   publisher={American Physical Society (APS)},
   author={Knill, E. and Laflamme, R.},
   year={1998},
   month=Dec, pages={5672–5675} }

@misc{shepherd2006,
      title={Computation with Unitaries and One Pure Qubit}, 
      author={Dan Shepherd},
      year={2006},
      eprint={quant-ph/0608132},
      archivePrefix={arXiv},
      primaryClass={quant-ph},
      url={https://arxiv.org/abs/quant-ph/0608132}, 
}

@article{Barber_2025,
   title={Sampling Groups of Pauli Operators to Enhance Direct Fidelity Estimation},
   volume={9},
   ISSN={2521-327X},
   url={http://dx.doi.org/10.22331/q-2025-07-03-1784},
   DOI={10.22331/q-2025-07-03-1784},
   journal={Quantum},
   publisher={Verein zur Forderung des Open Access Publizierens in den Quantenwissenschaften},
   author={Barberà-Rodríguez, Júlia and Navarro, Mariana and Zambrano, Leonardo},
   year={2025},
   month=July, pages={1784} }

@article{Kulesza_2012,
   title={Determinantal Point Processes for Machine Learning},
   volume={5},
   ISSN={1935-8245},
   url={http://dx.doi.org/10.1561/2200000044},
   DOI={10.1561/2200000044},
   number={2-3},
   journal={Foundations and Trends® in Machine Learning},
   publisher={Emerald},
   author={Kulesza, Alex and Taskar, Ben},
   year={2012},
   month=Dec, pages={123–286} }

@article{Hough_2006,
   title={Determinantal Processes and Independence},
   volume={3},
   ISSN={1549-5787},
   url={http://dx.doi.org/10.1214/154957806000000078},
   DOI={10.1214/154957806000000078},
   number={none},
   journal={Probability Surveys},
   publisher={Institute of Mathematical Statistics},
   author={Hough, J. Ben and Krishnapur, Manjunath and Peres, Yuval and Virág, Bálint},
   year={2006},
   month=Jan }

@misc{derezinski2019,
      title={Exact sampling of determinantal point processes with sublinear time preprocessing}, 
      author={Michał Dereziński and Daniele Calandriello and Michal Valko},
      year={2019},
      eprint={1905.13476},
      archivePrefix={arXiv},
      primaryClass={cs.LG},
      url={https://arxiv.org/abs/1905.13476}, 
}

@misc{derezinski2020,
      title={Determinantal Point Processes in Randomized Numerical Linear Algebra}, 
      author={Michał Dereziński and Michael W. Mahoney},
      year={2020},
      eprint={2005.03185},
      archivePrefix={arXiv},
      primaryClass={cs.DS},
      url={https://arxiv.org/abs/2005.03185}, 
}

@article{Reardon_Smith_2024,
   title={Improved simulation of quantum circuits dominated by free fermionic operations},
   volume={8},
   ISSN={2521-327X},
   url={http://dx.doi.org/10.22331/q-2024-12-04-1549},
   DOI={10.22331/q-2024-12-04-1549},
   journal={Quantum},
   publisher={Verein zur Forderung des Open Access Publizierens in den Quantenwissenschaften},
   author={Reardon-Smith, Oliver and Oszmaniec, Michał and Korzekwa, Kamil},
   year={2024},
   month=Dec, pages={1549} }

@article{Dias_2024,
   title={Classical simulation of non-Gaussian fermionic circuits},
   volume={8},
   ISSN={2521-327X},
   url={http://dx.doi.org/10.22331/q-2024-05-21-1350},
   DOI={10.22331/q-2024-05-21-1350},
   journal={Quantum},
   publisher={Verein zur Forderung des Open Access Publizierens in den Quantenwissenschaften},
   author={Dias, Beatriz and Koenig, Robert},
   year={2024},
   month=May, pages={1350} }

@article{Mele_2025,
   title={Efficient Learning of Quantum States Prepared With Few Fermionic Non-Gaussian Gates},
   volume={6},
   ISSN={2691-3399},
   url={http://dx.doi.org/10.1103/PRXQuantum.6.010319},
   DOI={10.1103/prxquantum.6.010319},
   number={1},
   journal={PRX Quantum},
   publisher={American Physical Society (APS)},
   author={Mele, Antonio Anna and Herasymenko, Yaroslav},
   year={2025},
   month=Jan }

@misc{paviglianiti2025,
      title={Emergence of Generic Entanglement Structure in Doped Matchgate Circuits}, 
      author={Alessio Paviglianiti and Luca Lumia and Emanuele Tirrito and Alessandro Silva and Mario Collura and Xhek Turkeshi and Guglielmo Lami},
      year={2025},
      eprint={2507.12526},
      archivePrefix={arXiv},
      primaryClass={quant-ph},
      url={https://arxiv.org/abs/2507.12526}, 
}

@misc{wille2025,
      title={Classical simulation of parity-preserving quantum circuits}, 
      author={Carolin Wille and Sergii Strelchuk},
      year={2025},
      eprint={2504.19317},
      archivePrefix={arXiv},
      primaryClass={quant-ph},
      url={https://arxiv.org/abs/2504.19317}, 
}

@article{Macchi1975TheCA,
  title={The coincidence approach to stochastic point processes},
  author={Odile Macchi},
  journal={Advances in Applied Probability},
  year={1975},
  volume={7},
  pages={83 - 122},
  url={https://api.semanticscholar.org/CorpusID:119402685}
}

@article{Cunden_2018,
   title={Free Fermions and the Classical Compact Groups},
   volume={171},
   ISSN={1572-9613},
   url={http://dx.doi.org/10.1007/s10955-018-2029-6},
   DOI={10.1007/s10955-018-2029-6},
   number={5},
   journal={Journal of Statistical Physics},
   publisher={Springer Science and Business Media LLC},
   author={Cunden, Fabio Deelan and Mezzadri, Francesco and O’Connell, Neil},
   year={2018},
   month=Apr, pages={768–801} }

@article{K_kc__2022,
   title={Fixed Depth Hamiltonian Simulation via Cartan Decomposition},
   volume={129},
   ISSN={1079-7114},
   url={http://dx.doi.org/10.1103/PhysRevLett.129.070501},
   DOI={10.1103/physrevlett.129.070501},
   number={7},
   journal={Physical Review Letters},
   publisher={American Physical Society (APS)},
   author={Kökcü, Efekan and Steckmann, Thomas and Wang, Yan and Freericks, J. K. and Dumitrescu, Eugene F. and Kemper, Alexander F.},
   year={2022},
   month=Aug }

@article{Wiersema_2024,
   title={Classification of dynamical Lie algebras of 2-local spin systems on linear, circular and fully connected topologies},
   volume={10},
   ISSN={2056-6387},
   url={http://dx.doi.org/10.1038/s41534-024-00900-2},
   DOI={10.1038/s41534-024-00900-2},
   number={1},
   journal={npj Quantum Information},
   publisher={Springer Science and Business Media LLC},
   author={Wiersema, Roeland and Kökcü, Efekan and Kemper, Alexander F. and Bakalov, Bojko N.},
   year={2024},
   month=Nov }

\end{document}